\DeclareMathOperator*{\argmin}{\arg\!\min}
\newcommand*\Let[2]{\State #1 $\gets$ #2}
\newtheorem{example}{Example}
\newtheorem{theorem}{Theorem}
\title{Eliciting Worker Preference for Task Completion}
\author{%
  {Mohammad Esfandiari{\small $~^{\#1}$}, Senjuti Basu Roy{\small $~^{\#1}$}, Sihem Amer-Yahia{\small $~^{*2}$}} 
    \vspace{1mm}\\
    \fontsize{10}{10}\selectfont\rmfamily\itshape
    $^{\#}$\,NJIT, USA
    \fontsize{9}{9}\selectfont\ttfamily\upshape
    $^{1}$\,me76@njit.edu, senjutib@njit.edu
    \vspace{1mm}\\
    \fontsize{10}{10}\selectfont\rmfamily\itshape
    $^{*}$\,Univ. Grenoble Alpes, CNRS, LIG, France\\
    \fontsize{9}{9}\selectfont\ttfamily\upshape
    $^{2}$\,sihem.amer-yahia@univ-grenoble-alpes.fr
}
\begin{document}

\maketitle
\begin{abstract}
Current crowdsourcing platforms provide little support for worker feedback. Workers are sometimes invited to post free text describing their experience and preferences in completing tasks. They can also use forums such as Turker Nation\footnote{\url http://turkernation.com/} to exchange preferences on tasks and requesters. In fact, crowdsourcing platforms rely heavily on observing workers and inferring their preferences implicitly. In this work, we believe that {\em asking workers to indicate their preferences explicitly} improve their experience in task completion and hence, the quality of their contributions. Explicit elicitation can indeed help to build more accurate worker models for task completion that captures the evolving nature of worker preferences. We design a worker model whose accuracy is improved iteratively by requesting preferences for task factors such as required skills, task payment, and task relevance. We propose a generic framework, develop efficient solutions in realistic scenarios, and run extensive experiments that show the benefit of explicit preference elicitation over implicit ones with statistical significance. 
\end{abstract}

\section{Introduction}

The main actors of a crowdsourcing platform are tasks and workers who complete them. A range of studies point out the importance of designing incentive schemes, other than financial ones, to encourage workers during task completion~\cite{BedersonQ11,KitturNBGSZLH13}. In particular, it is expected that a crowdsourcing system should ``{\em achieve both effective task completion and worker satisfaction}''. The ability to characterize the workforce with {\em factors that influence task completion} is recognized to be of great importance in building such a system~\cite{DBLP:conf/dbcrowd/RoyLTAD13, amer2016human, kaufmann2011more, DBLP:journals/corr/abs-1210-0962, DBLP:conf/sigecom/HortonC10, MartinHOG14, RogstadiusKKSLV11,CSCW17,DBLP:conf/cscw/DaiRPC15}.
Those efforts have focused on implicitly observing workers and inferring their preferences. In this paper, we argue that solely relying on implicit observations does not suffice and propose to {\em elicit preferences from workers} explicitly, as they complete tasks. Any computational model, designed for the workers to understand their task completion likelihood needs to consume worker preference. The evolving nature of worker preference requires to periodically ask workers and refine such models. To the best of our knowledge, this work is the first to examine the benefit of explicit preference elicitation from workers and its impact on effective task completion. Our proposed approach of explicit preference elicitation does not incur additional burden to the workers; in fact, platforms such as, Amazon Mechanical Turk\footnote{\url{https://www.mturk.com/}} already seek worker feedback in the form of free text. Our effort is to judiciously select questions for elicitation in a stuctured and holistic fashion.

In this paper, {\em our objective is to design a framework that advocates for explicit preference elicitation from workers to develop a model that guides task completion}. That differs from developing solutions for task assignment. The objective behind preference elicitation is to use obtained feedback to effectively maintain a \textit{Worker Model}. Given a task $t$ that a worker $w$ undertakes (either via self-appointment or via an assignment algorithm), there could be one of two possible outcomes : 1. the task is completed successfully. 2. otherwise. In reality, worker preferences are {\em latent}, i.e., they are to be inferred through task factors and task outcomes. Popular platforms, such as Mechanical Turk or Prolific Academic,\footnote{\url{https://www.prolific.ac/}}, have characterized tasks using factors, such as {\em type, payment} and {\em duration}.
While our framework is capable to consume any available task factors, our effort nevertheless is to  propose a \textit{ generic solution that characterizes workers by understanding their preferences for a given set of task factors}~\cite{DBLP:conf/dbcrowd/RoyLTAD13,amer2016human}.

Our overarching goal is to seek feedback from workers to effectively maintain a \textit{Worker Model} for task completion. To achieve this goal, the  first challenge is to define an accurate model that predicts, per worker, how much each task factor is responsible for the successful completion of that task or for its failure. 
 We propose to {\em bootstrap} this model by selecting a small set of tasks a worker needs to complete initially to learn her model. 
An equally important challenge is to update the \textit{Worker Model}, as workers complete tasks. Indeed, unless that model is updated periodically, it is likely to become outdated, as worker's preferences evolve over time.
To update the model, we advocate the need to explicitly elicit from a worker her preferences. That is a departure from the literature where workers are observed and their preferences computed implicitly. We claim that the explicit elicitation of preferences results in a more accurate \textit{Worker Model}. Preferences are elicited via the {\bf Question Selector} that selects a set of $k$ task factors and asks a worker $w$ to rank them. For example, a worker may be asked {\em``Rank task relevance and payment''}. A higher rank for payment will indicate the worker's preference for high paying tasks over those most relevant to her profile. Once the worker provides her preference, the \textit{Worker Model} is updated with the help of the {\bf Preference Aggregator}. {\bf Question Selector} and {\bf Preference Aggregator} constitute the two computational problems of our framework.

{\bf Worker Model.} Our natural choice is to use a graphical model~\cite{koller2009probabilistic}, such as a Bayesian Network where each node is a random variable (task factors/worker preference/task outcome) and the structure of the graph expresses the conditional dependence between those variables. The observed variables are task factors and task outcomes and the \textit{Worker Model} contains worker preferences in the form of latent variables that are inferred through that model. 
It is however known that structure learning in Bayesian Network is NP-hard~\cite{koller2009probabilistic} and that the parameters could be estimated through methods such as Expectation Maximization, that also are computationally expensive. As both {\bf Question Selector} and {\bf Preference Aggregator} have to invoke the model many times,  it becomes prohibitively expensive to use it in real time. We therefore propose a simplified model that has a one-to-one correspondence between task factors and worker preference. The preference of a worker for a task factor is construed as a weight and the \textit{Worker Model} becomes a linear combination of the task factors. This simplification allows us to design efficient solutions.

{\bf Question Selector.} The question selector intends to select the $k$-task factors whose removal maximizes the improvement of the \textit{Worker Model} $\mathcal{F}$. The idea is to present those {\em uncertain} factors to the worker and seek her explicit preference. We prove that optimally selecting $k$ questions, i.e., $k$ task factors for a worker, is NP-hard, even when the \textit{Worker Model} is linear. We develop an efficient alternative using an iterative greedy algorithm that has a provable approximation bound.

{\bf Preference Aggregator.} The second technical problem is to update the \textit{Worker Model} with the elicited preference. Given a set of $k$ task factors, worker $w$ provides an absolute order on these factors. The obtained ranking is expressed as a set of $k(k-1)/2$ pairwise linear constraints, as $i > j$, $i > l$, etc. We design an algorithm that updates the \textit{Worker Model} using the same optimization function as the one used to build it initially, modified by adding those constraints. With a Bayesian Network as the underlying model, the addition of dummy variables would encode the constraints aptly. However, with one variable per constraint, the solution would not scale. For the simplified linear \textit{Worker Model}, we add them as pairwise linear constraints. The problem then becomes a constrained least squares problem that could be solved optimally in polynomial time.

We run experiments that measure the accuracy of our model and the scalability of our approach with real tasks and workers: $165,168$ tasks from CrowdFlower involving $58$ workers from Amazon Mechanical Turk. We measure the accuracy of the {\em Worker Model} against several baselines: a random selection of which task factors to invoke preferences for, and implicit preference computation~\cite{edbtMata}. We show that soliciting preferences explicitly and using them to update the model greatly reduces error and largely outperforms implicit solutions with statistical significance. We also show that our approach scales well.

In summary, our contributions are: \begin{itemize}
\item Problem Formalism (Section~\ref{sec:model}): A framework that has a \textit{Worker Model} that captures, per worker, her preference for task factors to predict the likelihood of task completion. 
We present an innovative formulation to bootstrap the model. An important aspect of the model is that it could be easily adapted to other crowdsourcing processes, such as, task assignment or worker compensation.
We present two core problems around the model: {\bf Question Selector} that asks a worker to rank the $k$ task factors that cause the highest error in the model, and {\bf Preference Aggregator} that updates the model with elicited preferences.
\item Technical Results (Section~\ref{sec:solutions}): We study the hardness of our problems and their reformulation under realistic assumptions, as well as design efficient solutions with provable guarantees.
\item Experimental Results (Section~\ref{sec:experiments}): We present extensive experiments that corroborate that explicit preference elicitation outperforms implicit preferences~\cite{edbtMata}, and that our framework scales well.
\end{itemize}


\section{Formalism and Framework}
\label{sec:model}

We present our formalism, following which we provide an overview of the proposed framework and the problems we tackle.

\begin{example}\label{ex1}
  We are given a set of tasks, where each task is characterized by a set of factors (e.g.,  {\em type, payoff, duration} are some examples). A task could be of different types, such as, image tagging, ranking, sentiment analysis. Payoff determines the \$ value the workers receives as payment, whereas, duration is an indication of the time a worker needs to complete that task. Generalizing this, one can imagine that each task could be described as a vector of different factors and a set of tasks together gives rise to a task factor matrix $\mathcal{T}^f$. One such matrix of $6$ tasks is presented below:
\begin{equation*}
  \begin{footnotesize}
    \begin{Bmatrix}
    \label{task_factor}
        {\mathit{id}} & {\mathit{tagging}} & {\mathit{ranking}} & {\mathit{sentiment}} & {\mathit{payoff}} & {\mathit{duration}} & {\mathit{outcome}} \\
        t_1 & 1 & 0 & 0 & high & long & 1 \\
        t_2 & 1 & 0 & 0 & low & short & 0 \\
        t_3 & 0 & 1 & 0 & low & short & 1 \\
        t_4 & 0 & 1 & 0 & low & long  & 0\\
        t_5 & 0 & 0 & 1 & high & short & 1 \\
        t_6 & 0 & 0 & 1 & high & long & 0 \\
    \end{Bmatrix}
    \end{footnotesize}
\end{equation*}
Given a task $t$ that a worker $w$ undertakes (either via self-appointment or via an assignment algorithm), there could be one of two possible outcomes : 1. the task is completed successfully (denoted by $1$). 2. otherwise (denoted by $0$). The last column of the task factor matrix indicates outcomes of each of the tasks. These could be known from prior history or predicted using a mathematical model.
\end{example}

\subsection{Formalism}
{\bf Task Factors.} In a crowdsourcing platform, each task $t$ in a set of $n$ given tasks $\mathcal{T}=\{t_1,...,t_n\}$ is characterized by a set of $m$ factors whose values are either explicitly present or could be extracted. For this work, we assume that for a task $t$, its factors give rise to a vector $\vec{t}^f$ that is given and we do not focus on how to obtain them. 
 This gives rise to the task-factor matrix $\mathcal{T}^f$ of dimension $n \times m$. For the simplicity of exposition, task factors are presented as binary, although our proposed solutions adapt when they are continuous or categorical.

Using Example~\ref{ex1}, the task factor matrix consists of $6$ tasks and each task is described by $5$ factors.

{\bf Worker Preferences.} The preferences of a worker $w$ are represented by a vector $\vec{w}^f$ of length $p$ that takes real values and determines the preferences of $w$ for tasks. Using Example~\ref{ex1}, $\vec{w}^f$ could be represented as {\em latent variables}, such as, $\{\textit{skill, motivation, reputation}\}$. The correspondence between task factors and worker preference is surjective (e.g., task type $\rightarrow$ skill, $\{duration, \text{payoff}\}$ $\rightarrow$ motivation). Worker preference variables cannot be observed and must be inferred.

{\bf Explicit Questions.} An explicit question is asked to elicit $w$'s preference on a particular task factor, as the tasks dictate worker preference and hence her performance thereof. In fact, there is an one to one correspondence between the questions and task factors. A set of $k$ questions is asked to obtain a preferred order among a set of $k$ task factors (where $k$ is part of the input).  As an example, one may ask to rank {\em ``task duration, tagging tasks, ranking tasks, sentiment analysis tasks, payment''}. A worker provides an absolute order among these $5$ factors as her preference. The ranking can be simply interpreted as the worker's preference for the first factor followed by the second, etc. For example, if the worker ranks {\em payment} first, she means a preference for high paying tasks to any others including those most relevant to her profile.


\subsection{Framework and Challenges}\label{frame}
We propose an iterative framework (refer to Figure~\ref{system_workflow}) that is designed to ask personalized questions to a worker to elicit her preference in a crowdsourcing platform. The rationale for our proposal is that while task factors are stable, a worker's preference evolves as workers undertake tasks. Workers' skills improve as they complete tasks and their motivation varies during task completion~\cite{CSCW17,edbtMata}. How to define an iteration is orthogonal to our problem. Indeed, an iteration could be
defined by discretizing time into equal-sized windows, by the number of available/completed tasks, by the number of workers, or a combination thereof. We will see in Section~\ref{sec:experiments} how we define an iteration in our experiments.


Central to our framework is a model  $\mathcal{F}$ that consumes task factors and given a worker's history, infers her preference vector to predict the outcome of a new task to be undertaken by her. 
Even though we develop a worker model to predict task completion quality, this information could be used in many places to characterize the workforce of a crowdsourcing platform and enable several improvements such as the analysis of workers' fatigue~\cite{CSCW17} and motivation, as well as task assignment~\cite{ZhengWLCF15, HoV12, HoJV13, DBLP:journals/vldb/RoyLTAD15, edbtMata}. However, unless the \textit{Worker Model} is refreshed or updated periodically, it is likely to become outdated, as worker preference evolves over time~\cite{CSCW17,amer2016human,edbtMata}. To update the model, one has to periodically invoke an explicit preference elicitation step through {\bf Question Selector} that selects a set of $k$ task factors and asks worker $w$ to rank them. Once the worker provides her preference, the \textit{Worker Model} is updated by the {\bf Preference Aggregator}. These two components form the heart of our computational challenges. {\em The first is to quickly select the bst set of $k$ factos to invoke feedback for. The second is to quickly relearn the model while satisfying the answers the worker has provided.}

\begin{table}
\begin{tabular}{|c|l|}
  \hline
  Notation & Definition \\
  $\mathcal{T}$ & a set of tasks $\{t_1,\ldots,t_n\}$ \\
  $\vec{t}^f$         & a vector describing task factors \\
  $\mathcal{T}^f$ & task factor matrix \\
  $\vec{w}^f$           & worker $w$'s preference vector \\
  $\mathcal{Q}$ & a set of questions (task factors)\\
  $\mathcal{Q}^s$ & a set of selected $k$ questions\\
  $t^{O}$     & a label signaling task outcome\\
  $\mathcal{T}^{O}$     & labels signaling task outcomes of a set $\mathcal{T}$\\
  $\mathcal{B}$ & a set of $b$ tasks for bootstrapping \\
  $\mathcal{E}$ & reconstruction error of the model $\mathcal{F}$ \\
  \hline
\end{tabular}
\caption{Table of important notations}
\end{table}

\begin{figure}[t]
  \centering
      \includegraphics[width=0.5\textwidth]{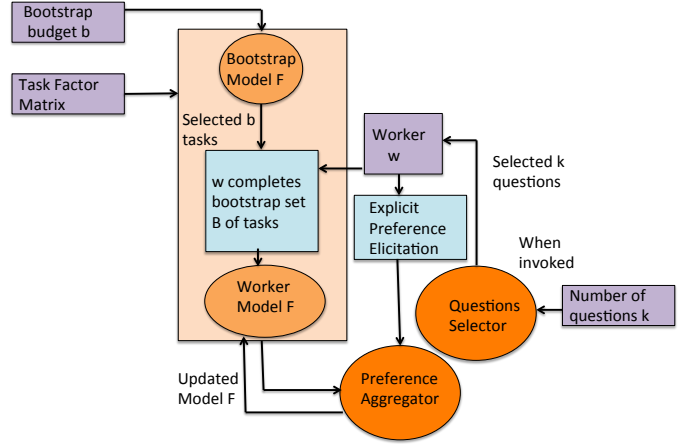}
  \caption{Explicit Preference Elicitation Framework}\label{system_workflow}
\end{figure}

The remainder of the paper focuses on a particular worker $w$, unless otherwise stated - i.e., each of the components of the framework is designed or invoked for her.

\vspace{-0.1in}
\subsection{Problem Definitions}\label{prob}

{\bf Worker Model.} Given the task factor matrix $\mathcal{T}^f$ of a set of tasks $\mathcal{T}$, where each task $t$ is associated with a known outcome ($t^{O}=1/0$ successfully completed or not), the \textit{Worker  Model}  $\mathcal{F}$ estimates the preference vector $\vec{w}^f$ of a worker $w$. $\mathcal{F}$ is a function of $\mathcal{T}^f$ and $\vec{w}^f$ , denoted as $\mathcal{T}^f \otimes \vec{w}^f$. The correctness of $\mathcal{F}$ is estimated using \textit{its reconstruction error}, i.e., $\mathcal{E}(\mathcal{T}^{O} -  (\mathcal{T}^f \otimes \vec{w}^f))$ is minimized.

Once the model is built, it can predict the outcome of a  task undertaken by $w$. On Example~\ref{ex1}, the  model predicts the likely outcome of each of the $6$ tasks, by consuming $\mathcal{T}^f$ and inferring $\vec{w}^f$.

{\bf Bootstrapping.} Initially, as no past history of a worker $w$ is available, she is treated akin to a ``cold user'' and the {\em bootstrap process} selects a subset of tasks, $\mathcal{B} \subset \mathcal{T}$ to be completed by $w$. The size of  $\mathcal{B}$, denoted by $b$, is given as a budget to bootstrapping.  The idea is to leverage the contributions of $w$ for those $b$ tasks to estimate her preference and build the \textit{Worker Model} $\mathcal{F}$. During bootstrapping, the algorithm is {\em offline}, i.e., it apriori decides all $b$ tasks, without actually observing outcomes of tasks completed by $w$. For that, it selects the $b$ tasks whose feedback minimizes the {\em expected reconstruction error} over the remaining $\mathcal{T} - \mathcal{B}$ tasks, i.e., $\mathop{\mathbb{E}}((\mathcal{T} - \mathcal{B})^{O} -  ((\mathcal{T} - \mathcal{B})^f \otimes \vec{w}^f))$ is minimized. The same approach is adopted in subsequent steps to refine $\mathcal{F}$.

Given Example~\ref{ex1} with $b=3$, the objective would be to select $3$ tasks that  minimize the expected reconstruction error.

We are now ready to describe the two fundamental problems that our system needs to solve.

{\bf Question Selector.} This module selects the best set of $k$ questions for a worker $w$. The objective is to select those task factors that are responsible for the model's inaccuracy, i.e., removing them would improve the most the reconstruction error of $\mathcal{F}$. Let $\mathcal{E}$ denote the current reconstruction error of $\mathcal{F}$ and $\hat{\mathcal{E}}$ denote it when $k$ task factors are removed. Given $\mathcal{Q}$, the $k$ questions are selected such that the model reconstruction error improves the most, i.e., $argmax_{\{\mathcal{Q}^s \in \mathcal{Q}:|\mathcal{Q}^s|=k\}} (\mathcal{E} - \hat{\mathcal{E}}_{m-\mathcal{Q}^s})$.

Using Example~\ref{ex1}, if $k=2$, this will select any two of the five task factors in the task factor matrix.

{\bf Preference Aggregator.} The preferences $\mathcal{P}$ provided by a worker for task factors, could be expressed as {\em a set of $k(k-1)/2$ pairwise linear constraints} of the form, $i>j$, $i >l$, $j>l$. Worker's preferences are taken as hard constraints. Given $\mathcal{P}$, the objective is to relearn $\mathcal{F}$ that satisfies $\mathcal{P}$ such that its reconstruction error is minimized. The objective therefore is to minimize $\mathcal{E}(\mathcal{T}^{O} -  (\mathcal{T}^f \otimes \vec{w}^f))$ such that the constraints in $\mathcal{P}$ are satisfied.

Using Example~\ref{ex1}, if worker $w$ explicitly states that she prefers {\tt annotation} tasks to {\tt ranking} tasks, this preference is translated into constraints expressed on the worker preference vector. Those are then used by the preference aggregator to update $\mathcal{F}$.

\section{Solutions}
\label{sec:solutions}

We now propose a generic approach for building and bootstrapping the \textit{Worker Model} and for solving the two computational problems enunciated in Section~\ref{sec:model}, namely, {\bf Question Selector} and {\bf Preference Aggregator.} In each case, we present our generic approach and then develop a simplified, yet realistic framework, that enables efficient solutions with theoretical guarantees.

\subsection{Worker Model}\label{genmodel}
As described in Section~\ref{frame}, central to our framework is a supervised model $\mathcal{F}$, designed for a worker $w$, that consumes task factors and predicts task outcomes by inferring $w$'s preferences for those factors.
There are several machine learning models that could potentially be used.  A natural choice is a probabilistic graphic model~\cite{koller2009probabilistic}, such as, Bayesian Networks, where each node is a random variable (task factors/worker preferences/task outcome) and the structure of the graph expresses the conditional dependence between them. In particular, it could be expressed as a Bayesian Network, which is represented as a directed acyclic graph (DAG). As described in Figure~\ref{pgm}, the observed variables are task factors and task outcomes and the worker preferences are captured as latent variables that are inferred.

Furthermore, we can impose constraints, such as, the task factors are independent from each other, and the worker preference variables are correlated/not, or the worker preference variables determine task outcomes, which are in turn dependent on the task factors. Given a set of tasks, each with an outcome and associated vector of factors, the model $\mathcal{F}$ corresponds to the factorization of the joint probability distribution over these variables:
\begin{equation}
\begin{split}
Pr(\text {task factors, worker preferences, outcome}) =  \\
 Pr(outcome|\text {worker preferences}) \times \\
\prod_{i=1}^{p} Pr(\text {worker preference i}| \text {parent task factors of i}) \\
\times \prod_{j=1}^{m} Pr(\text {task factor j})
\end{split}
\end{equation}

There are two primary computational difficulties when building $\mathcal{F}$ : learning the structure of the graph and estimating parameters to obtain a probability distribution at each node. The overall objective is to minimize reconstruction error. Once the model is built, we use it for inference, i.e., it will infer the probability distribution over a worker's preference, given the values of task factors and task outcomes.

The proposed non-linear model is prohibitively expensive to compute. The structure learning part is known to be NP-hard~\cite{koller2009probabilistic} while the parameters could be estimated through methods such as Expectation Maximization, also computationally expensive. The known efficient algorithms that propose a workaround are primarily heuristics~\cite{margaritis2000bayesian} without approximation guarantees.

Therefore, we suggest to {\em simplify our model under the assumption that the preference of a worker for a task factor is a weight and the \textit{Worker Model} is a linear combination of a given set of task factors}. As a result, the \textit{Worker Model} can be expressed as a linear regression function. A one-to-one correspondence between task factors and worker preferences (\# task factors = length of $\vec{w}^f$, i.e., $p=m$) would  treat  worker preferences  as weights that are to be estimated. The model takes the form,
\begin{equation*}
\mathcal{T}^{O} = \vec{w}^f \times \mathcal{T}^f
\end{equation*}
Given the task factor matrix $\mathcal{T}^f$  and the observed task outcome vector $\mathcal{T}^{O}$, the objective is to estimate $\vec{w}^f$, such that the reconstruction error is minimized, i.e.,

\begin{equation}\label{ls}
\argmin_{\vec{w}^f \in \mathbb{R}^m} || \mathcal{T}^{O} - \vec{w}^f \times \mathcal{T}^f||_2
\end{equation}

\begin{figure}
  \centering
      \includegraphics[width=0.4\textwidth]{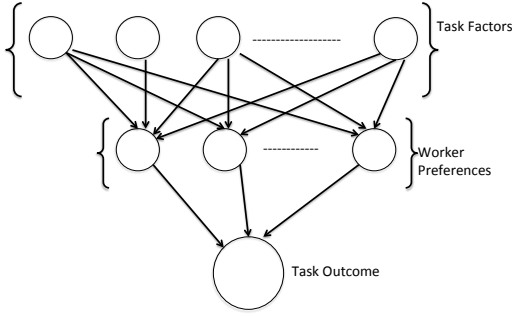}
  \caption{Worker Model}\label{pgm}
\vspace{-0.1in}
\end{figure}

{\bf Algorithm for Worker Model.}
As long as the task factor matrix $\mathcal{T}^f$ is invertible, or could be inverted by adding an additional term~\cite{DBLP:journals/corr/BiswasLR17}, the objective function expressed in Equation~\ref{ls} has an equivalent and alternative representation of the form ${({\mathcal{T}^f}^T\mathcal{T}^f)}^{-1}{\mathcal{T}^f}^T\mathcal{T}^{O}$, where ${({\mathcal{T}^f}^T\mathcal{T}^f)}^{-1}{\mathcal{T}^f}^T$ is known as the Moore-Penrose pseudo-inverse matrix of $\mathcal{T}^f$. The proof could be found in~\cite{albert1972regression}.

We design an ordinary least squares (OLS)~\cite{dismuke2006ordinary} solution to  estimate the regression coefficient or worker preferences $\vec{w}^f$. It first transforms the objective function in Equation~\ref{ls} to a Moore-Penrose pseudo-inverse matrix of $\mathcal{T}^f$. This equivalent representation is proved to have a global minimum, hence the obtained OLS estimator is optimal. The overall running time of OLS is dictated by matrix multiplication (multiplying ${({\mathcal{T}^f}^T\mathcal{T}^f)}$,  Matrix inversion (inverting ${{\mathcal{T}^f}^T\mathcal{T}^f)}^{-1}$, followed by matrix multiplication (to obtain ${({\mathcal{T}^f}^T\mathcal{T}^f)}^{-1}{\mathcal{T}^f}^T$), and a final matrix multiplication (to obtain ${({\mathcal{T}^f}^T\mathcal{T}^f)}^{-1}{\mathcal{T}^f}^T$). The asymptotic complexity is $\mathcal{O}(m^2n+m^3)$.

\textbf{Bootstrapping the Worker Model.}\label{bootspec}
Unfortunately, the \textit{Worker Model} can not be built for a brand new worker, who has not completed any task before in the platform. For such workers, we propose to bootstrap the \textit{Worker Model}. Bootstrapping is a common practice in recommender systems and is adopted by largely used platforms such as Netflix and MovieLens.\footnote{\url{https://www.netflix.com/, https://movielens.org/}}. The objective in our case is to select the set of tasks $\mathcal{B}$ and learn the \textit{Worker Model} $\mathcal{F}$ that minimizes the {\em expected reconstruction error} over the remaining $\mathcal{T} - \mathcal{B}$ tasks. For a selected set of $b$ tasks, we compute a set of $2^b$ \textit{Worker Models}, where each model is learned by encoding one of the $2^b$ possible combinations of the task outcomes, and capturing the reconstruction error over the remaining set of $\mathcal{T} - \mathcal{B}$ based on the learned model. This gives rise to a bootstrapping tree with $2^b$ branches, as the one shown in Figure~\ref{bstp}. Each branch is associated with a probability value that represents the probability of that combination of $b$ task outcomes. As an example, in Figure~\ref{bstp}, the leftmost branch captures the model where all three tasks would have a successful completion, the branch represents that probability, and the corresponding leaf represents the reconstruction error.

{\bf Generic and Simplified Probability Model for Bootstrapping.} To capture the probability of a branch in the bootstrap tree, we need to calculate the probability of successful/unsuccessful outcome of each task by the ``cold'' worker. Technically, we are interested to compute $Pr(t^0 = 1|\vec{t}^f,w)$ (probability of successful completion) and $Pr(t^0 = 0|\vec{t}^f,w)$ (probability of unsuccessful completion). In a real crowdsourcing platform, however, there is little to no information available about a new worker that could be potentially used to capture similarity between her and other existing workers in the system. Thus, the only way to obtain this information is to see if there are {\em past tasks with similar characteristics} (although undertaken by different workers) and analyze the outcome of those tasks. Therefore, these values should rather rely only on task factors, $Pr(t^0 = 1|\vec{t}^f)$ and $Pr(t^0 = 0|\vec{t}^f)$  should be computed as the joint distribution of the task factors and outcomes. Using Example~\ref{ex1}, if we assume that the task factors, such as, duration, payoff, and the task types are correlated with each other, then the outcome of a task relies on the joint distribution over these factors.

The optimal solution of this problem could be obtained by computing the {\em joint distribution} using a structure similar to a contingency table where each cell represents a possible set of values for each of the $m$ task factors and the value of that cell represents the probability of successful completion. Classical algorithms such as iterative proportional fitting (or IPF)~\cite{fienberg1970iterative}, could be used for estimating this joint distribution. Unfortunately, these algorithms do not scale when the number of factors and their possible values are large~\cite{fienberg1983iterative}.

Given a task $t$, $Pr(t^0 = 1|\vec{t}^f)$ (probability of successful completion) and $Pr(t^0 = 0|\vec{t}^f)$ (probability of unsuccessful completion), s the joint distribution over the task factors and outcomes is very expensive, as this will require us to compute a joint distribution over a $v^m$ space, if each of the $m$ task factor variables takes $v$ possible values. A more realistic probability model relies on a conditional independence assumption. It assumes that the task factors are themselves independent but the task outcome is conditionally dependent on each of the task factors. This Bayesian assumption is not an overstretch. Using Example~\ref{ex1}, one can see that the different task types, annotation, ranking, or sentiment analysis are independent from each other, as is duration, but the task outcome is conditionally dependent on each of these factors. In practice, some dependence may exist between factors, e.g., task duration and payment may be correlated. However, that is not the case in general as tasks are posted by different requesters and in different countries. Therefore, we can confidently claim that a conditionally independent probability model captures a large number of cases in practice.

Under the conditional independence assumption, we have
\begin{equation}\label{specprob}
Pr(t^0 = 1|\vec{t}^f) = \prod_{i \in \vec(t)^f} Pr(t^0 = 1 | t^{(i)})
\end{equation}
Using Bayes' Theorem, this could be rewritten as
\begin{equation*}
Pr(t^0 = 1|\vec{t}^f)  = \prod_{i \in \vec(t)^f} \frac{Pr((t^0 = 1) | t^{(i)}) \times Pr(t^0 = 1) }{Pr(t^{(i)})}
\end{equation*}
Computing  the  probability  formula requires us to know the value of quantities such as
$Pr((t^0 = 1) | t^{(i)})$, $Pr(t^0 = 1)$, and $Pr(t^{(i)})$. However, singleton
and pairwise probabilities can be computed in a pre-processing
step  considering other tasks and workers.  For  example, $Pr((t^0 = 1) | t^{(i)})$ can  be  estimated  as  the  fraction  of  previous tasks with successful outcomes
that also have the $i$-th factor as $t^{(i)}$ and $Pr(t^0 = 1)$ can be estimated as the fraction of tasks with successful outcomes, whereas, $Pr(t^{(i)})$ is the fraction of tasks that have $t^{(i)}$ as the $i$-th factor. Each of these calculations are efficient and could be done in a pre-processing phase.

Assuming independence among tasks, the probability of a combination of outcomes of $b$ tasks is then obtained by multiplying the probabilities of the individual task outcomes. Considering Figure~\ref{bstp}, the probability of the left most branch is $Pr({t_a}^0 = 1|\vec{t}^f) \times Pr({t_b}^0 = 1|\vec{t}^f) \times Pr({t_d}^0 = 1|\vec{t}^f)$

{\bf Bootstrapping Algorithm.} Given a budget of $b$ tasks, these tasks are chosen a priori, therefore, one has to explore all possible $\binom{n}{b}$ tasks, and for each task, compute its two possible outcomes probabilistically. This gives rise to an exponential search space of $\binom{n}{b}$ possible choices. Even when the best set of $b$ tasks are chosen, each task has one of two possible outcomes, which gives rise to a bootstrap tree with $2^{b}$ branches~\cite{mottin2013probabilistic}, as shown in the Figure~\ref{bstp}.  Each branch from the root to the leaf corresponds to a set of $b$ tasks and their outcomes, and each leaf is associated with a reconstruction error. The reconstruction error of the branch is the product of the respective probabilities of the outcomes and the reconstruction error of the model $\mathcal{F}$ thereof. The expected reconstruction error is the sum of the errors over all $2^{b}$ branches. The objective, as mentioned in Section~\ref{prob}, is to design the tree that improves the expected reconstruction error the most.

\begin{figure}
  \centering
      \includegraphics[width=0.4\textwidth]{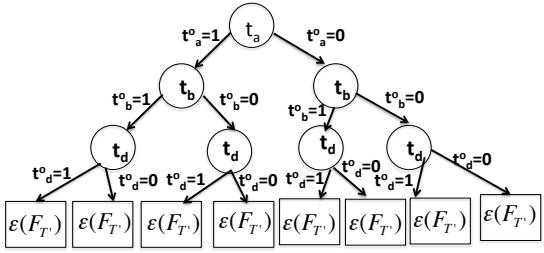}
  \caption{Bootstrapped tree of three chosen tasks $\{t_a,t_b,t_d\}$}\label{bstp}
\vspace{-0.1in}
\end{figure}

\begin{theorem}\label{th1}
Bootstrapping the \textit{Worker Model} is NP-hard.
\end{theorem}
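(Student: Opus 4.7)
The plan is to prove NP-hardness via a polynomial-time reduction from a combinatorial subset-selection problem, for which MAXIMUM COVERAGE is a convenient target: given a ground set $U=\{u_1,\ldots,u_N\}$, subsets $S_1,\ldots,S_M\subseteq U$, and integers $k,L$, decide whether $k$ subsets can be chosen whose union has size at least $L$. I would translate such an instance into a bootstrapping instance by creating one candidate task per $S_j$ with factor vector $\vec{t}_j^{\,f}$ equal to the $\{0,1\}$-indicator of $S_j$, plus $N$ auxiliary ``evaluation'' tasks (one per element $u_i\in U$) that are forced to live in $\mathcal{T}\setminus\mathcal{B}$ and thus drive the reconstruction error. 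The bootstrap budget is set to $b=k$.

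To neutralize the probabilistic $2^{b}$-branch structure of the bootstrap tree, I would collapse it by choosing the pre-computed singleton and pairwise probabilities in Equation~\ref{specprob} to be either $0$ or $1$, so that every candidate task has a deterministic outcome. Then exactly one branch carries probability one, and the expected reconstruction error equals the error of the single OLS fit on that branch, reducing bootstrapping to a deterministic subset-selection problem over the rows of $\mathcal{T}^f$. The $N$ evaluation tasks are then given factor vectors equal to coordinate indicators $e_i$ and outcomes fixed to a positive constant (say $1$); the ``evaluation'' tasks are prevented from being chosen into $\mathcal{B}$ either by a large budget penalty or, more cleanly, by making their outcomes have probability zero in the collapsed probability model so that the bootstrapping objective never prefers them.

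The heart of the argument is to match the OLS reconstruction error on $\mathcal{T}\setminus\mathcal{B}$ to the coverage objective. Using the Moore--Penrose pseudo-inverse expression from Section~\ref{genmodel}, I would show that the least-squares fit on the selected candidate rows can drive the error on an evaluation row $e_i$ to zero exactly when coordinate $i$ appears in at least one selected row (i.e., $u_i$ is covered by some chosen $S_j$); otherwise the $i$-th column of the restricted factor matrix is identically zero and no weight vector can match the nonzero target, contributing a fixed positive penalty. Summing contributions then yields a reconstruction error that falls below a polynomial threshold iff at least $L$ elements of $U$ are covered, completing the reduction. The step I expect to be the main obstacle is decoupling: verifying that the OLS weights used to zero out error on covered coordinates do not inadvertently introduce error on other evaluation rows when the indicator vectors of selected sets overlap. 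I would handle this by working in the basis of coordinate directions and arguing, via a rank calculation on the selected submatrix, that covered coordinates contribute independently to the residual; a mild perturbation of the construction (e.g., slightly scaling the indicator entries) can be used to ensure the required column independence whenever the natural binary construction is rank-deficient.
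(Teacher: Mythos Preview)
Your high-level route---reduce a coverage problem to the linear bootstrapping instance---matches the paper's sketch (it invokes \textsc{Set Cover}; your \textsc{Maximum Coverage} is in the same family). The paper additionally records the trivial case: if the model itself requires NP-hard structure learning (the Bayesian-network version of Section~\ref{genmodel}), bootstrapping inherits hardness because it calls that model as a subroutine. You skip this and attack the nontrivial polynomial-model case directly, which is the right emphasis.

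The genuine gap is in the central ``iff''. You claim the OLS fit on the selected rows drives the error on $e_i$ to zero exactly when $i$ is covered. The ``only if'' direction is fine: an all-zero column forces $w_i=0$ in the pseudo-inverse solution, so an uncovered $e_i$ with target $1$ contributes a fixed positive residual. The ``if'' direction fails. Take a single selected row $\mathbf{1}_{\{1,2\}}$ with outcome $1$; the minimum-norm least-squares solution is $w_1=w_2=\tfrac12$, so the evaluation row $e_1$ with target $1$ incurs squared residual $\tfrac14$, not $0$. In general the learned $w_i$ depends on the overlap pattern of the selected sets, so the residual on $e_i$ is not a $\{0,1\}$ function of coverage, and the total error is not an affine function of the number of covered elements. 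Your proposed remedy---perturb to obtain column independence---misdiagnoses the problem: with $k$ training rows and more than $k$ covered coordinates the columns cannot all be independent, and even when they are, independence of columns does not force $w_i=1$. To rescue the reduction you need a different gadget, for instance encoding the instance through the factor columns rather than the task rows, or engineering the training outcomes so that a fixed target vector is the unique least-squares fit on every admissible training set; only then does the test error become monotone in the number of uncovered elements and the threshold argument go through.
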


\begin{proof}
(Sketch:) If the underlying model is a graph whose structure computation is NP-hard~\cite{koller2009probabilistic}, naturally the bootstrapping becomes NP-hard, as it has to invoke that model as a subroutine. Even for an arbitrarily simple model that is polynomial time computable, the NP-hardness could be proved using a reduction from the Set Cover Problem, even when each factor is only binary~\cite{garey1972optimal}.
\end{proof}

We now design an algorithm that is greedy in nature and avoids searching the $\binom{n}{b}$ space to select the $b$ tasks. It runs in $b$ iterations and in the $i$-iteration selects the task out of the remaining set of tasks that has the highest marginal improvement over the objective function. However, even when the $b$ tasks are selected, computing the bootstrap tree is exponential in $b$ (recall Figure~\ref{bstp}). This algorithm makes $\mathcal{O}(n \times b)$ comparisons to select the best set of $b$ tasks. However, while the greedy selection is in progress, given an already selected $b'$ tasks, it still has to build the bootstrap tree with  $2^{b'}$ branches. The worst case asymptotic complexity is  therefore, $\mathcal{O}(n \times b \times 2^{b})$.

{\bf Running Example:} Using Example~\ref{ex1}, tasks $\{t_1,t_3,t_6\}$ are chosen for bootstrapping. Once $\mathcal{F}$ is developed, it assigns the following weights to the task factors. tagging=$0.4$, ranking= $0.69$, sentiment=$0.1$, payoff=$0.4$, duration           =$0.42$. This is intuitively explainable, as the tasks that the worker complete successfully are tagging and ranking (hence gets higher weights), but the sentiment analysis task is not completed successfully (thus, gets a lower preference value).

\subsection{Question Selector}\label{genq}
The objective is to select $k$ questions (task factors) eliminating which would maximize the reconstruction error reduction of the model $\mathcal{F}$.
Ideally, out of $m$ task factors (a set $\mathcal{Q}$ of questions),  $k$ factors should be chosen as a set. This gives rise to an exponential search space that could be modeled using a decision tree like structure with~\cite{mottin2013probabilistic}, $\binom{m}{k}$ possible branches. Each branch from the root to the leaf corresponds to a set of $k$ questions, and the leaf is associated with a reconstruction error. The objective, as mentioned in Section~\ref{prob}, is to design that tree that improves the reconstruction error the most.

\begin{theorem}
Optimally selecting $k$ questions to elicit preferences is NP-hard.
\end{theorem}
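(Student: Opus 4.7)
The plan is to establish hardness already for the simplified linear Worker Model, since the harder non-linear/graphical case follows immediately (as was noted in the sketch of Theorem~\ref{th1}). Observe first that the current error $\mathcal{E}$ is a constant with respect to the selection $\mathcal{Q}^s$, so maximizing $(\mathcal{E} - \hat{\mathcal{E}}_{m - \mathcal{Q}^s})$ is equivalent to minimizing $\hat{\mathcal{E}}_{m - \mathcal{Q}^s}$. In the linear model, fixing which $m-k$ factors remain is exactly fixing the support of the regression coefficient $\vec{w}^f$ in Equation~\ref{ls}. Hence Question Selector is precisely the problem of choosing a support of size $m-k$ minimizing the ordinary least-squares residual --- the classical best-subset / sparse approximation problem for regression, which is well-known to be NP-hard.

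I would first formalize the decision version: given $\mathcal{T}^f$, $\mathcal{T}^{O}$, $k$, and a threshold $\tau$, does there exist $\mathcal{Q}^s \subseteq \mathcal{Q}$ with $|\mathcal{Q}^s| = k$ and $\hat{\mathcal{E}}_{m-\mathcal{Q}^s} \le \tau$? Next, mirroring the strategy already employed for Theorem~\ref{th1}, I would reduce from Set Cover with parameter $k'$. Given a Set Cover instance $(U, S_1, \ldots, S_m, k')$, build $\mathcal{T}^f$ with one row per element of $U$ and one column per set $S_j$, setting $\mathcal{T}^f_{e,j} = 1$ iff $e \in S_j$, and $\mathcal{T}^{O}_e = 1$ for every $e \in U$. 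I would set $k = m - k'$, so the ``retained'' factors correspond to a choice of $k'$ sets. The goal is to show that the residual can be driven to zero iff the retained columns cover $U$: when they do, the 0/1 indicator weights $\vec{w}^f$ (suitably normalized by the covering multiplicity of each element) make $\vec{w}^f \times \mathcal{T}^f = \mathcal{T}^{O}$; when they do not, at least one row of the reduced matrix is identically zero yet its target entry is $1$, so a strictly positive residual is forced.

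The main obstacle will be ruling out ``phantom'' least-squares solutions that exploit fractional or negative weights to drive the residual small even when coverage fails --- the pseudo-inverse is perfectly happy to produce non-indicator fits. To close this gap I would augment $\mathcal{T}^f$ with a block of auxiliary rows (and a large outcome weight) that anchors each coefficient $w_j^f$ inside a narrow range around $\{0, c_j\}$ for a prescribed normalization constant $c_j$, so that only near-indicator weights are viable while the augmented rows contribute zero residual at those values. A sufficiently large penalty weight on the augmentation ensures any deviation blows up $\hat{\mathcal{E}}$ past the threshold $\tau$, which I would set strictly below the minimum residual achievable by a non-covering selection. Should this combinatorial construction prove delicate to tune, the cleanest fallback is to invoke Natarajan's classical NP-hardness of sparse least-squares approximation directly, identifying $\vec{w}^f$, $\mathcal{T}^f$, $\mathcal{T}^{O}$ with the sparse vector, dictionary, and target of that problem; either path yields the theorem.
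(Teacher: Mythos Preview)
Your high-level reduction matches the paper exactly: observe that $\mathcal{E}$ is constant, recast the problem as retaining the best $m-k$ columns of $\mathcal{T}^f$ for least-squares fitting, and then appeal to a known hardness result (the paper invokes Set Cover and, for the linear case in the next theorem, the column-subset / pseudo-inverse-norm result of~\cite{DBLP:journals/corr/BiswasLR17}, whereas your fallback invokes Natarajan's sparse-approximation hardness). Either citation closes the argument, so your proposal is ultimately sound.

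That said, your direct Set Cover construction has a genuine gap in the \emph{forward} direction, not just the backward one you flagged. You claim that when the retained $k'$ sets cover $U$, ``0/1 indicator weights \ldots\ suitably normalized by the covering multiplicity of each element'' achieve zero residual. But the weights $w_j$ are indexed by \emph{columns} (sets), while covering multiplicity is a property of \emph{rows} (elements); there is no single per-column normalization that works. Concretely, take $U=\{1,2,3\}$, $S_1=\{1,2\}$, $S_2=\{2,3\}$: both sets cover $U$, yet the system $w_1=1$, $w_1+w_2=1$, $w_2=1$ is infeasible, so the residual is strictly positive even for a valid cover. Your proposed anchoring gadget does nothing to repair this, since the problem is feasibility of the target system, not wildness of the pseudo-inverse. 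A workable Set Cover reduction would need, e.g., Exact-Cover-by-3-Sets (so every covered element has multiplicity exactly one) or a different target vector; absent that, lean on the Natarajan reference and drop the bespoke construction.
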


\begin{proof}
(Sketch:) A careful review of the objective function (refer to Section~\ref{prob}) shows that since $\mathcal{E}$ is a constant at a given point - thus, maximizing $(\mathcal{E} - \hat{\mathcal{E}}_{m - \mathcal{Q}^s}):\{\mathcal{Q}^s \in \mathcal{Q}:|\mathcal{Q}^s|=k\} $ is same as minimizing the reconstruction error of  $\hat{\mathcal{E}}_{m - \mathcal{Q}^s}$, i.e., retaining the best $m-k$ factors (thus eliminating the worst $k$ factors) that has the smallest reconstruction error of $\mathcal{F}$.
The problem thus becomes selecting the best $m-k$ factors that have the smallest reconstruction error. The remaining $k$ factors would therefore be chosen as the explicit questions for preference elicitation.

The reduction is done using the Set Cover problem. We omit the details for brevity, but elaborate later on that the problem remains NP-hard even when we consider a simple \textit{Worker Model}.
\end{proof}

{\bf Efficient Solution for Question Selector.}
Since $\mathcal{E}$ is a constant at a given point, maximizing $(\mathcal{E} - \hat{\mathcal{E}}_{m - \mathcal{Q}^s}):\{\mathcal{Q}^s \in \mathcal{Q}:|\mathcal{Q}^s|=k\} $ is same as minimizing the reconstruction error of  $\hat{\mathcal{E}}_{m - \mathcal{Q}^s}$, i.e., retaining the best $m-k$ factors (thus eliminating the worst $k$ factors) that has the smallest reconstruction error of $\mathcal{F}$. The problem thus becomes selecting the best $m-k$ factors that have the smallest reconstruction error. The remaining $k$ factors would therefore be chosen as the explicit questions for preference elicitation.

\begin{theorem}
Optimally selecting $k$ questions for explicit worker preference is NP-hard even when the \textit{Worker Model} is linear.
\end{theorem}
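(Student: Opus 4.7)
The plan is to reduce from Exact Cover by 3-Sets (X3C), a canonical NP-complete problem, building on the reformulation established in the preceding paragraph: selecting $k$ questions to eliminate is equivalent to choosing $m-k$ columns of $\mathcal{T}^f$ that minimize $\min_{\vec{w}^f} \|\mathcal{T}^{O} - \mathcal{T}^f_S \vec{w}^f\|_2$, where $\mathcal{T}^f_S$ denotes the restriction of $\mathcal{T}^f$ to the retained columns $S$. This is exactly the best subset selection problem for ordinary least squares, whose NP-hardness was established classically by Natarajan; I would give a self-contained reduction tailored to our formulation.

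Given an X3C instance with universe $X = \{x_1, \ldots, x_{3q}\}$ and collection $\{C_1, \ldots, C_m\}$ of 3-element subsets of $X$, I would construct a Question Selector instance as follows: let $\mathcal{T}^f \in \{0,1\}^{3q \times m}$ be the element-set incidence matrix, i.e., $(\mathcal{T}^f)_{ij} = 1$ iff $x_i \in C_j$; let $\mathcal{T}^{O} = \vec{1}$; and set $k = m - q$, so that exactly $q$ columns are retained. The decision version of the question-selection problem asks whether some retained subset of size $q$ achieves reconstruction error $0$. I claim this holds if and only if the X3C instance has an exact cover of size $q$.

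The forward direction is immediate: if $\{C_{j_1}, \ldots, C_{j_q}\}$ is an exact cover, setting $\vec{w}^f$ to $\vec{1}$ on those columns yields $\mathcal{T}^f_S \vec{1} = \vec{1} = \mathcal{T}^{O}$. For the reverse direction, suppose $\mathcal{T}^f_S \vec{w}^f = \vec{1}$ for some real $\vec{w}^f \in \mathbb{R}^q$. Then every element $x_i$ must lie in $\bigcup_\ell C_{j_\ell}$, for otherwise the $i$-th equation would read $0 = 1$. Since $\sum_\ell |C_{j_\ell}| = 3q = |X|$ and this union covers all $3q$ elements, the sets $C_{j_\ell}$ must be pairwise disjoint — which is precisely an exact cover of size $q$.

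The main obstacle is the reverse direction, since a priori one might worry that negative or fractional weights allow zero reconstruction error without an underlying combinatorial cover. The double-counting argument above rules this out cleanly: the $\{0,1\}$ structure of $\mathcal{T}^f$ forces coverage of every row, and the identity $\sum_\ell |C_{j_\ell}| = |X|$ converts coverage into exact cover for free. Since X3C is NP-complete, the decision version of Question Selector under the linear Worker Model is NP-hard, and hence so is the corresponding optimization problem.
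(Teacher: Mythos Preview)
Your reduction from X3C is correct and self-contained. The forward direction is immediate, and your double-counting argument in the reverse direction is clean: zero residual forces every row to be hit by some retained column, and then $\sum_\ell |C_{j_\ell}| = 3q = |X|$ turns coverage into exact cover regardless of the real weights in $\vec{w}^f$.

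This is, however, a genuinely different route from the paper's. The paper does not reduce from a combinatorial covering problem directly; instead it invokes an existing NP-hardness result for \emph{column subset selection for matrices}---choosing a prescribed number of columns so that the pseudo-inverse of the resulting submatrix has smallest (Frobenius or spectral) norm---and then argues that retaining the best $m-k$ factors in the linear Worker Model is an instance of that problem. Their reduction is essentially a one-line parameter translation on top of the cited result.

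What each approach buys: the paper's route is shorter on the page because it outsources the combinatorial work to prior literature, and it ties the hardness to the same trace/pseudo-inverse surrogate that the greedy algorithm {\tt k-ExFactor} later optimizes. Your route is more elementary, requires no external black box, and---importantly---attacks the problem exactly as it is defined in Section~\ref{prob}, namely minimizing the actual reconstruction error $\|\mathcal{T}^{O}-\mathcal{T}^f_S\vec{w}^f\|_2$ for a given target vector, rather than the target-independent pseudo-inverse norm. In that sense your reduction is arguably more faithful to the stated Question Selector objective, whereas the paper's sketch implicitly passes through the surrogate of Equation~\eqref{trace}.
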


\begin{proof}
(Sketch): When a linear model such as the one in Equation~\ref{ls} is assumed, as described above, the objective function is equivalent to minimizing ${({\mathcal{T}^f}^T\mathcal{T}^f)}^{-1}{\mathcal{T}^f}^T\mathcal{T}^{O}$, where ${({\mathcal{T}^f}^T\mathcal{T}^f)}^{-1}{\mathcal{T}^f}^T$ is known as the Moore-Penrose pseudo-inverse matrix of $\mathcal{T}^f$.

Therefore, the problem of identifying and removing the $k$ worst factors, i.e., retaining the best $m-k$ factors, is akin to selecting a subset of $m-k$ columns from the task factor matrix $\mathcal{T}^f$ such that the pseudo-inverse of this sub-matrix has the smallest norm. Under the Frobenius or $L_2$ norms this problem is proved to be NP-hard~\cite{DBLP:journals/corr/BiswasLR17}.

Using the NP-hardness proof described in~\cite{DBLP:journals/corr/BiswasLR17}, our reduction is rather simple. Given an instance of that problem, we set $k$ (the $k$ worst factors to remove) as the difference between the total number of columns and $k'$ ($k'$= the best set of $k'$ columns giving rise to the submatrix whose pseudo-inverse has the smallest norm). The rest of the proof is trivial and omitted for brevity.
\end{proof}

{\bf Greedy Algorithm for Question Selector.}
Under the linear model such as the one described in Equation~\ref{ls} and its equivalent representation using a pseudo-inverse matrix, the objective of identifying the set $\mathcal{Q}_s$ of $k$ selected questions (thereby identifying $m-k$ best factors) out of a set $\mathcal{Q}$ of $m$ questions (a task factor is a question) is equivalent to retaining the task factor submatrix with $m-k$ columns that is of the following form ~\cite{pukelsheim2006optimal}:

\begin{equation}\label{trace}
\argmin_{\mathcal{Q}_s \subset \mathcal{Q}, |\mathcal{Q}_s|=k} Trace({\mathcal{T}^f}^T_{Q\setminus Q_s} \mathcal{T}^f_{Q \setminus Q_s})^{-1}
\end{equation}

\begin{algorithm}
  \caption{Algorithm {\tt k-ExFactor}: Greedy Question Selector for a Linear Model
    \label{alg:bg}}
  \begin{algorithmic}[1]
    \Require{Task factor matrix $\mathcal{T}^f$, set of questions $\mathcal{Q}$}
    \Ensure{$\mathcal{Q}^s$ with $k$ factors}
    \Statex
      \Let{$\mathcal{T}_Q$}{$\mathcal{T}^f$}
      \Let{$\mathcal{Q}^s$}{$\mathcal{Q}$}
      \For{$j \gets 1 \textrm{ to } k$}
      \Let{$q_j$}{$argmin_{q \in \mathcal{Q}}$ Trace$({\mathcal{T}_Q}^T_{Q \setminus q} \mathcal{T}_{Q \setminus q})^{-1}$} \label{line:trace}
      \Let{$\mathcal{T}_Q$}{$\mathcal{T}_{Q \setminus j}$}
      \Let{$\mathcal{Q}^s$}{$\mathcal{Q} \setminus q_j$}
      \EndFor
       \State Return $\mathcal{Q} - \mathcal{Q}^s$
  \end{algorithmic}
\end{algorithm}

We now describe a greedy algorithm {\tt k-ExFactor} to identify $k$ worst task factors (thus retaining $m-k$ best factors). Our algorithm makes use of Equation~\ref{trace} and has a provable approximation guarantee. It works in a backward greedy manner and eliminates the factors iteratively. It works in $k$ iterations, and in the $i$-th iteration, from the not yet selected set of factors, it selects a question $q_j$ and eliminates it which marginally minimizes Trace$({\mathcal{T}^f}^T_{Q\setminus f_j} \mathcal{T}^f_{Q \setminus f_j})^{-1}$. Once the $k^{th}$ iteration completes the eliminated $k$ questions are the selected $k$-factors for explicit elicitation. The pseudo code of the algorithm is presented in Algorithm~\ref{alg:bg}. Line \ref{line:trace} in Algorithm~\ref{alg:bg} requires a $\mathcal{O}(m^2n+m^3)$ time for matrix multiplication and inversion for the question under consideration. Therefore, the overall complexity is $\mathcal{O}(km^2n^2+m^3nk)$. Notice that most of the complexity is actually in the process of recomputing the model error and the actual question selection is rather efficient.

\begin{theorem}
Algorithm {\tt k-ExFactor} has an approximation factor of $\frac{m}{m-k}$.
\end{theorem}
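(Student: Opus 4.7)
The plan is threefold: (i) derive a closed-form expression for the one-step trace change under column removal via a Schur-complement identity; (ii) prove a per-step decrease lemma for the greedy rule; and (iii) chain the per-step bounds into the claimed factor $\frac{m}{m-k}$.

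For (i), write $X := \mathcal{T}^f$ and, for any retained subset $S \subseteq [m]$, set $\phi(S) := \text{Trace}((X_S^T X_S)^{-1})$ and $M := X_S^T X_S$. Block inversion yields the identity
\begin{equation*}
\phi(S \setminus \{j\}) \;=\; \phi(S) \;-\; \frac{\|(M^{-1})_{:,j}\|_2^{\,2}}{(M^{-1})_{j,j}} \;=:\; \phi(S) \;-\; s_j(S),
\end{equation*}
so Line~\ref{line:trace} of Algorithm~\ref{alg:bg} is exactly the rule that picks the column $j^\star$ maximizing $s_j(S)$.

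For (ii), the core inequality is $\max_{j \in S} s_j(S) \ge \phi(S)/|S|$, which follows from two standard facts: (a) for positive reals, $\max_j a_j/b_j \ge (\sum_j a_j)/(\sum_j b_j)$, which produces $\max_j s_j(S) \ge \|M^{-1}\|_F^{\,2}/\phi(S)$; and (b) Cauchy--Schwarz on eigenvalues $\{\lambda_i\}$ of $M$ gives $\|M^{-1}\|_F^{\,2} = \sum_i \lambda_i^{-2} \ge (\sum_i \lambda_i^{-1})^{\,2}/|S| = \phi(S)^2/|S|$. Hence each greedy removal contracts $\phi$ by at least a factor of $(|S|-1)/|S|$, so $\phi(S_{i+1}) \le \frac{m-i-1}{m-i}\phi(S_i)$ when $|S_i|=m-i$.

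For (iii), the per-step ratios telescope as $\prod_{i=0}^{k-1} \frac{m-i-1}{m-i} = \frac{m-k}{m}$, which bounds greedy's output against the initial full-matrix trace $\phi(S_0)$. To upgrade this into the approximation factor against $\phi(S^\star)$, I would couple the greedy trajectory with the optimal retained set $S^\star$: restrict the averaging of step (ii) to the $\ell_i := |S_i \cap ([m]\setminus S^\star)|$ columns that OPT would also remove (so $\ell_i \ge k-i$), and replace $\phi(S_i)/|S_i|$ by $(\phi(S_i) - \phi(S_i \cap S^\star))/\ell_i$ using the inequality $\sum_{j \in S_i \setminus S^\star} s_j(S_i) \ge \phi(S_i) - \phi(S_i \cap S^\star)$---a supermodularity-type property of $\phi$ provable from Cauchy interlacing of the Gram-matrix eigenvalues. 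Telescoping these coupled per-step bounds, together with the trivial observation $\phi(S_i \cap S^\star) \le \phi(S^\star)$, should yield precisely $\phi(S^G) \le \frac{m}{m-k}\phi(S^\star)$. The delicate part is the supermodularity-style inequality, plus the bookkeeping when greedy's pick lies outside $[m]\setminus S^\star$; I expect careful use of Cauchy interlacing (in the spirit of~\cite{DBLP:journals/corr/BiswasLR17}) to close it out.
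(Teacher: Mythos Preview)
The paper offers only a one-sentence sketch pointing to~\cite{de2007subset,avron2013faster}, so any detailed argument already goes beyond what is written there. Your steps (i) and (ii) are correct and are precisely the engine of those references: the Schur-complement identity $\phi(S\setminus\{j\}) = \phi(S) - \|(M^{-1})_{:,j}\|_2^{2}/(M^{-1})_{jj}$, the averaging/Cauchy--Schwarz bound $\max_j s_j(S)\ge \phi(S)/|S|$, and the telescoped inequality $\phi(S^G)\le \tfrac{m-k}{m}\,\phi([m])$. On this much you and the paper (via its citations) are aligned.

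Step (iii) is where there is a genuine gap. The telescoped bound compares greedy only to $\phi([m])$, and your upgrade to a bound against $\phi(S^\star)$ rests on the inequality $\sum_{j\in S_i\setminus S^\star} s_j(S_i)\ge \phi(S_i)-\phi(S_i\cap S^\star)$, which is equivalent to monotonicity of the marginals $s_j(\cdot)$ in the ground set, i.e., supermodularity of $\phi$. This is \emph{not} a known property of $S\mapsto \mathrm{Trace}\bigl((X_S^{T}X_S)^{-1}\bigr)$, and Cauchy interlacing does not deliver it: interlacing controls eigenvalues of principal submatrices, not the column-wise marginal sums you need. You flag this as ``delicate,'' but it is the whole crux. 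Moreover, even if the inequality held, the telescoping does not obviously produce the factor $\tfrac{m}{m-k}$: the per-step contraction is $(1-1/\ell_i)$ with $\ell_i=|S_i\setminus S^\star|$, and whenever greedy removes a column that lies in $S^\star$ the count $\ell_i$ fails to decrease, so the product $\prod_i(1-1/\ell_i)$ is trajectory-dependent and need not collapse to $\tfrac{m-k}{m}$. (In the extreme where greedy's removals all come from $S_i\setminus S^\star$, your recursion would force $\phi(S^G)\le\phi(S^\star)$ exactly---consistent only because in that case $S^G=S^\star$, so the argument gives nothing away from that boundary.) The cited papers do not close the comparison to the optimum via a submodular-style coupling; you should consult~\cite{de2007subset,avron2013faster} directly for the spectral lower-bounding step they actually use.
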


\begin{proof}
(sketch): The proof adapts from an existing result~\cite{de2007subset,avron2013faster} that uses backward greedy algorithm for {\em subset selection} for matrices and retains a given smaller number of columns such that the pseudo-inverse of the smaller sub-matrix has as smallest norm as possible. This is akin to removing $k$ worst task factors and retaining the best $m-k$ factors and the proof is a simple adaptation of~\cite{de2007subset,avron2013faster}.
 Exploration of a better approximation factor is deferred to future work.
\end{proof}

{\bf Running Example:} Using Example~\ref{ex1}, if $k=3$, $\{ \text{sentiment, Payoff, Duration}\}$ are the three task factors for which worker feedback is solicited.

\subsection{Preference Aggregator}\label{genp}
The second technical problem is to update $\mathcal{F}$ with the worker's preferences. Given a set of $k$ task factors, worker $w$ provides an absolute order on these factors. We design an algorithm that updates $\mathcal{F}$ using the same optimization function as the one used to build it initially, modified by adding the set of constraints that represent obtained preferences.


The worker provides a full order among the selected questions (task factors) in the form $i > j > r > l$. We express this full order using a set of pairwise constraints of the form $i > j$. If the preferences contain a full order among $k$ constraints, this gives rise to a total of $k(k-1)$ linear pairwise constraints.

Updating a Bayesian Network with constraints has been studied in the past~\cite{niculescu2006bayesian}. The idea is to add additional dummy variables that encode the constraints aptly. To satisfy $k(k-1)/2$ pairwise constraints, we will have to add that many number of dummy variables which considerably blows the size of the network. Unfortunately, such algorithms are expensive and unlikely to scale.

{\bf Efficient Solution for Preference Aggregator.}
For the simplified \textit{Worker Model}, with the linear constraints added to our objective function in Equation~\ref{ls}, the preference aggregation problem becomes a constrained least squares problem.

Specifically, our problem corresponds to a box-constrained least squares one as the solution vector must fall between known lower and upper bounds. The solution to this problem can be categorized into active-set or interior-point~\cite{mead2010least}. The active-set based methods construct a feasible region, compute the corresponding active-set, and use the variables in the active constraints to form an alternate formulation of a least squares optimization with equality constraints~\cite{stark1995bounded}. We use the interior-point method that is more scalable and encodes the convex set (of solutions) as a barrier function. It uses primal Newton Barrier method to ensure the KKT equality conditions to optimize the objective function~\cite{mead2010least}. The primal Newton Barrier interior-point is iterative and the exact complexity depends  on the barrier parameter and the number of iterations, but the algorithm is shown to be polynomial~\cite{stark1995bounded}.

{\bf Running Example:} Using Example~\ref{ex1} again, if the worker says that she prefers {\tt Duration > Sentiment > Payoff}, then the new weights that the preference aggregator estimates for $\mathcal{F}$ are, tagging=$0.1$, ranking= $0.1$, sentiment=$0.12$, payoff=$0.11$, duration=$0.97$. Notice that the order of the task factors provided by the worker is satisfied in the updated model.

\section{Experimental Evaluations}
\label{sec:experiments}
We describe our experimental setup, steps, and findings in this section. All the algorithms are implemented in Python 3.5.1. The experiments are conducted on a machine with Intel Core i7 4GHz CPU and 16GB of memory with Linux operating system. All the numbers are presented as an average of $10$ runs. We run both quality and scalability experiments and implement several baselines.

\subsection{Dataset Description}
We use $165,168$ CrowdFlower micro-tasks. A task belongs to one of the $22$ different categories, such as, tweet classification, searching information on the web, audio transcription, image tagging, sentiment analysis, entity resolution, or extracting information from news. Each task type is assigned a set of keywords that best describe its content and a payment, ranging between $\$0.01$ and $\$0.12$. Our tasks are {\em micro-tasks} that take less than a minute to complete.

Initially, we group a subset of micro-tasks into $240$ HITs and publish them on Amazon Mechanical Turk. Each HIT contains $20$ tasks and has a duration of $30$ minutes. When a worker accepts a HIT, he is redirected to our platform where he completes the tasks. A worker may complete several HITs in a work session. Workers get paid for every {\em micro-task} completed.

To qualify for our experiment, we require the workers to have previously completed at least $100$ HITs that are approved, and to have an approval rate above $80\%$. Overall, $58$ different workers complete tasks. When a worker is hired for the first time, she is asked to select a set of keywords from a given list of keywords that capture her preferences.

The task types along with other factors, such as, payment and duration, form the task factors. Our original data has $41$ task factors that are categorical or binary; after binarization of all the categorical factors, we obtain a total of $100$ factors. The length of any worker preference vector is therefore $100$. Of course, our proposed framework adapts even when the task factors are continuous.

{\bf Ground Truth.} Each micro-task has a known ground-truth. If a task, undertaken by a worker is completed successfully (i.e., the outcome matches the ground-truth), it is marked successful (value $1$). Otherwise, if the task is accepted but either not finished or not completed  correctly, its label is unsuccessful (value $0$). This information is used as the ground-truth for the \textit{Worker Model}.

{\bf Iteration.} We define an iteration as the completion of a HIT. When a worker finishes a HIT, we compute the error in the \textit{Worker Model}. We update the \textit{Worker Model}, when there is a non-zero error and start another iteration.

\subsection{Implemented Algorithms}
We now describe the algorithms that are implemented and compared for evaluation purposes.
\subsubsection{Worker Model \& Bootstrapping}
The linear model in Section~\ref{genmodel} is implemented with a regularization parameter $\alpha$. When implementing statistical models, this is a standard practice to avoid overfitting of the model. The overall objective function thus becomes,
\begin{equation}
\min_{\vec{w}^f\in\mathbb{R}^m} \left \|\mathcal{T}^O - \vec{w}^f  \times \mathcal{T}^f  \right \|_2 + \alpha \left \|\vec{w}^f  \right \|_2^2
\end{equation}
 The best value of $\alpha$ is chosen by generalized cross validation~\cite{mead2010least}.

Additionally, there are three algorithms that are implemented for bootstrapping the \textit{Worker Model}.

{\bf Random Bootstrapping.} {\tt RandomBoot} selects a random subset $\mathcal{B}$ of data as the initial tasks to present to the worker and records their outcome to estimate the \textit{Worker Model}.

{\bf Uniform Bootstrapping.} {\tt UniformBoot} does not learn anything to build the \textit{Worker Model} but bootstraps the model by assigning uniform weights to the worker preference vector.

{\bf Optimization-Aware Bootstrapping.} {\tt OptBoot} implements our algorithm given in Section~\ref{bootspec}.

\subsubsection{Explicit Feedback}

This has two important components - one is the {\bf Question Selector} that selects the task factors for explicit preference elicitation, the other is {\bf Preference Aggregator} that updates the \textit{Worker Model} using elicited preferences.\\
{\bf Optimization-Aware Question Selector.} {\tt k-ExFactor} is our proposed algorithm described in Section~\ref{genq}.\\
{\bf $k$-random Question Selector.} {\tt k-Random} is a simple baseline that randomly selects $k$-task factors for preference elicitation.
{\bf Preference Aggregator:} This is our implemented solution for preference aggregation, as described in Section~\ref{genp}.

\subsubsection{Implicit Feedback}
We also implement implicit feedback computation to serve as a comparison alternative to explicit feedback approaches.
Algorithm {\tt Implicit-1} is an adaption of a recent related work~\cite{edbtMata} that investigates how to implicitly capture worker motivation and use that for task assignment. While we do not necessarily focus on motivation as a factor in this work, we adapt the algorithm in~\cite{edbtMata} to estimate and update the worker preference vector over time. Since our focus is not on task assignment, once we estimate the worker preference vector using {\tt Implicit-1}, we use that in conjunction with our \textit{Worker Model} to predict a task outcome.
Algorithm {\tt Implicit-2} is a further simplification. It relearns the \textit{Worker Model} at the end of every iteration as the worker completes tasks and does not factor in the preference of the worker in the \textit{Worker Model}.

\subsection{Summary of Results}
There are two primary takeaways: \\
 {\bf 1. Our proposed explicit preference elicitation framework outperforms existing implicit ones with statistical significance.} We compare our approach {\tt k-ExFactor} for question selection with another explicit baseline {\tt k-Random} and two implicit preference computation algorithms {\tt Implicit-1}~\cite{edbtMata} and {\tt Implicit-2}. For qualitative evaluation, we present error (mean square error or MSE) with statistical significance test and find that {\tt k-ExFactor} convincingly and significantly outperforms the other three baselines under varying parameters: \# iterations, \# task factors, $k$. For bootstrapping, we compare our  solution {\tt OptBoot} with two baselines {\tt UniformBoot} and {\tt RandomBoot}. Again, our observation here is {\tt OptBoot} is superior qualitatively.\\
{2. \bf Our proposed solution is scalable.} In our scalability study, we vary \# tasks, \# task factors, $k$, and bootstrap sample size. While {\tt k-ExFactor} is slower than the other three algorithms, as it performs a significantly higher number of computations, it still scales very well. Unsurprisingly, our bootstrapping algorithm {\tt OptBoot} is slower than the two other baselines. Despite that, it scales reasonably well. These results demonstrate the effectiveness of eliciting explicit preferences making our framework usable in practice.

\subsection{Quality Experiments}
\label{sec:qual}
The objective of these experiments is to capture the effectiveness of our  explicit feedback elicitation framework and compare it with appropriate baselines. Unless otherwise stated, we capture effectiveness as reconstruction error, according to the objective function in Section~\ref{prob} using Mean Square Error or MSE.

{\bf Invocation of the Framework.} For quality experiments, the proposed framework is invoked iteratively as follows: in the beginning, we filter out the tasks and task completion history by worker id since the framework is personalized per worker. On average, a worker undertakes $100$ tasks. We randomly sample $70\%$ of each worker's data for training and the rest as the holdout set.

For bootstrapping, the \textit{Worker Model} is initially built by selecting a subset $\mathcal{B}$ of $b$ tasks from the training data and MSE is computed over the holdout set. After that, in an iteration, we select a set $x$ of $25$ tasks (unless otherwise stated), randomly from the holdout set and we calculate the score over the remaining set of tasks in the holdout set. Next, we will check if there is an error in the prediction of \textit{Worker Model} for set $x$. If yes, then we invoke the {\bf Question Selector} that seeks explicit feedback from the same worker. Upon receiving worker feedback, the \textit{Worker Model} is updated using the {\bf Preference Aggregator}. All these steps construe a single iteration of the framework. We periodically perform the aforementioned steps to get multiple iterations of the framework.

{\bf Parameter Setting.} For a given worker, there are three parameters to vary:  \# task factors, $k$, and \# iterations. For bootstrapping, we additionally vary the budget $b$. Unless otherwise stated, defaults values are $90$, $4$, and $7$, respectively. The best $90$ features are retained by performing feature selection using {\em Chi-squared test}\cite{forman2003extensive}. We also notice that the error does not reduce significantly beyond $k=4$ questions and $7$ iterations. By default, we always maintain the full history of worker preference while updating the \textit{Worker Model} under varying iterations and the default size of the bootstrapping set is $15$.

\subsubsection{Explicit vs. Implicit Feedback}
Figure~\ref{expimp} presents a comparative study between explicit, implicit, and no preference elicitation. We compare two explicit solutions with two implicit   ones. We vary \# iterations, \# task factors, and $x$ (\# tasks assigned to a worker after which the framework is invoked).
Figure\ref{Model_result} presents the error of each of the four algorithms by varying the number of iterations, where we compare two explicit algorithms ((\texttt{k-ExFactor} and \texttt{k-Random}) with implicit ones \texttt{Implicit-1}~\cite{edbtMata} and \texttt{Implicit-2}. Our method \texttt{k-ExFactor} significantly outperforms the other three. After $7$ iterations its error drops from $41\%$ to $15\%$ almost $10\%$ lower than other methods. A similar observation holds for \texttt{k-ExFactor} when we vary \# task factors (Figure~\ref{Model_feature}), and \# tasks (Figure~\ref{Model_tasks}). Our proposed solution convincingly and significantly outperforms \texttt{k-Random} and implicit preference computation.

\begin{figure*}[ht]
\centering
\subfigure[Error - varying iterations]{
   \includegraphics[height=4.2cm, width=5cm]{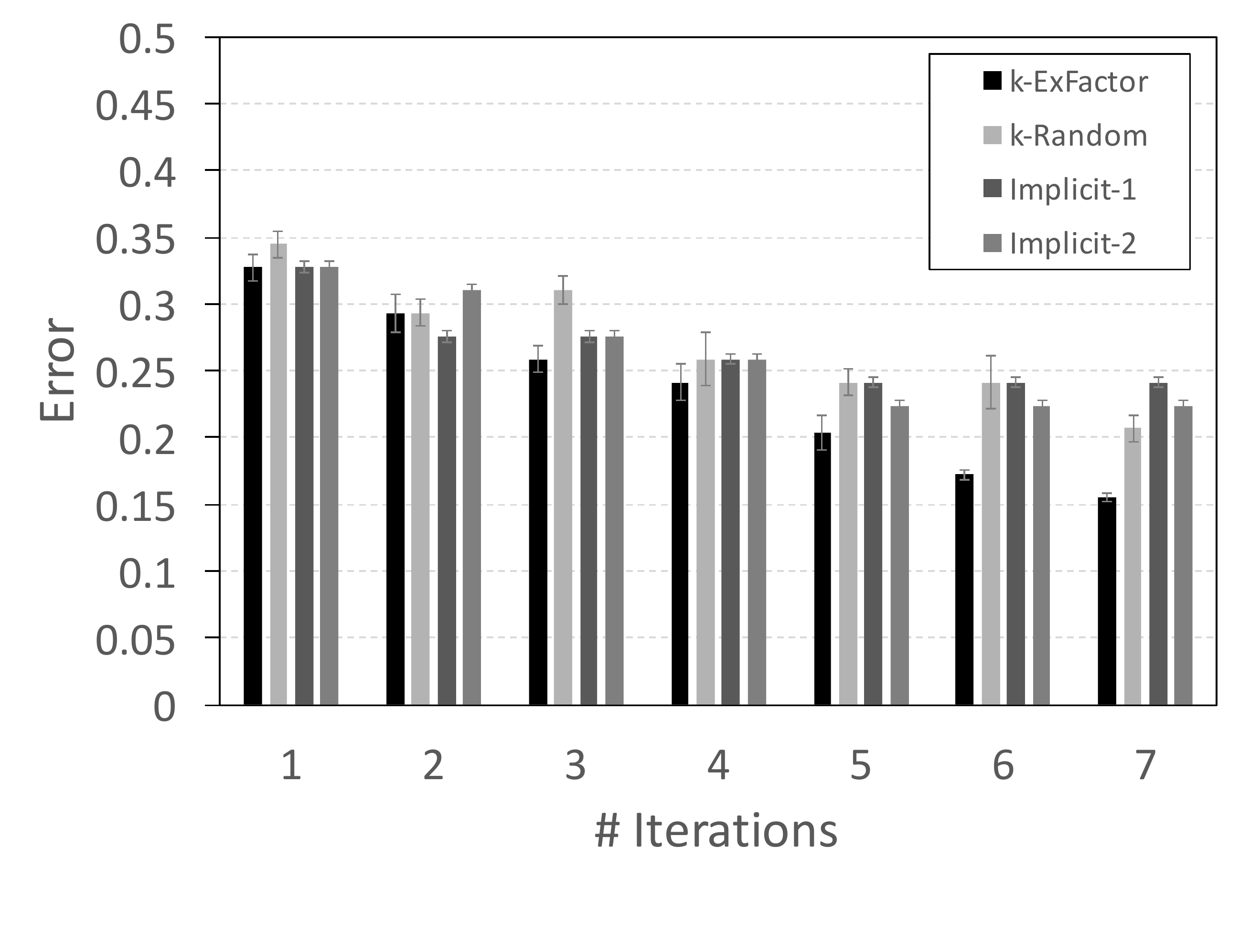}
     \label{Model_result}
    }
\subfigure[Error - varying task factors]{
   \includegraphics[height=4cm, width=5cm]{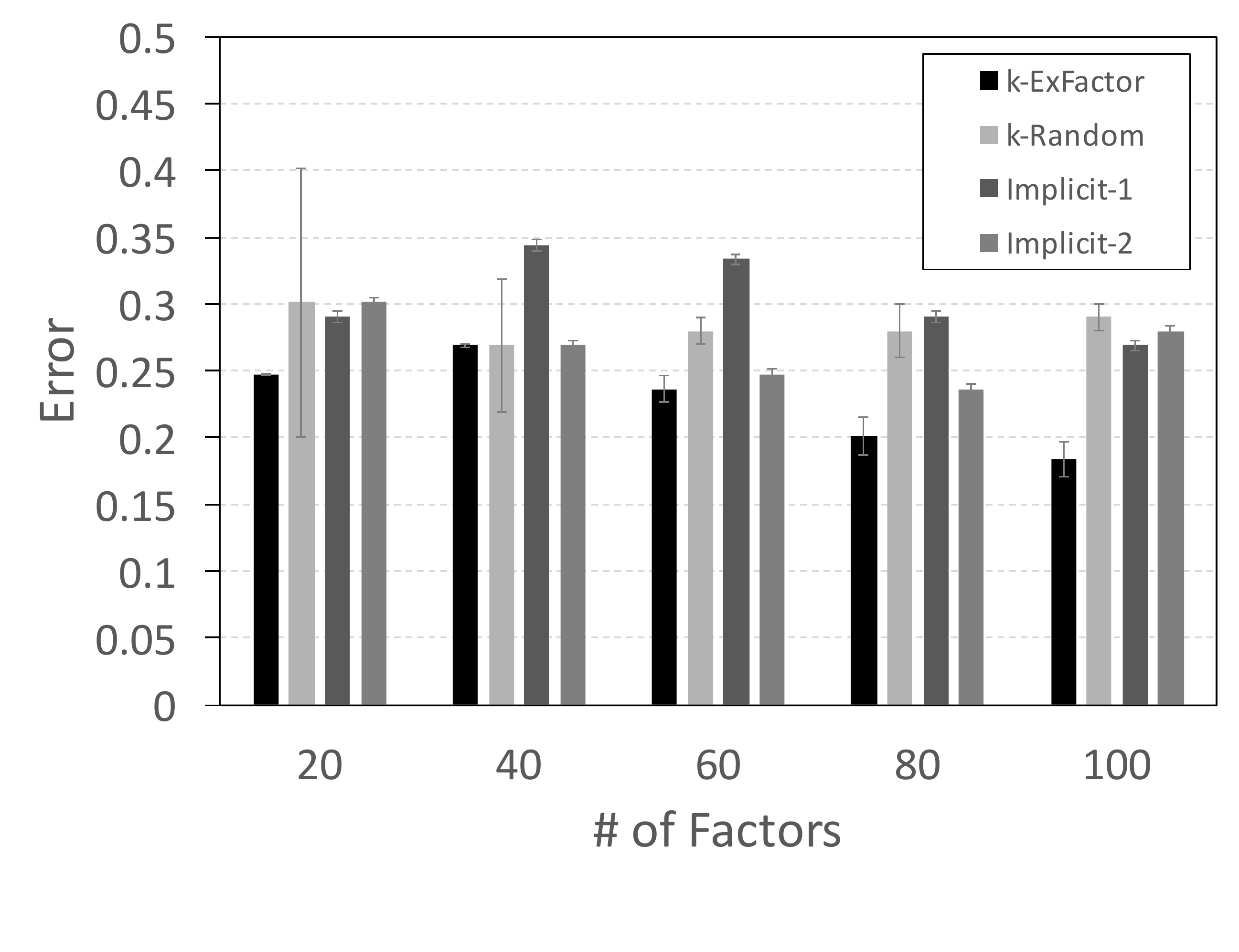}
     \label{Model_feature}
    }
\subfigure[Error - varying $x$]{
   \includegraphics[height=4cm, width=5cm]{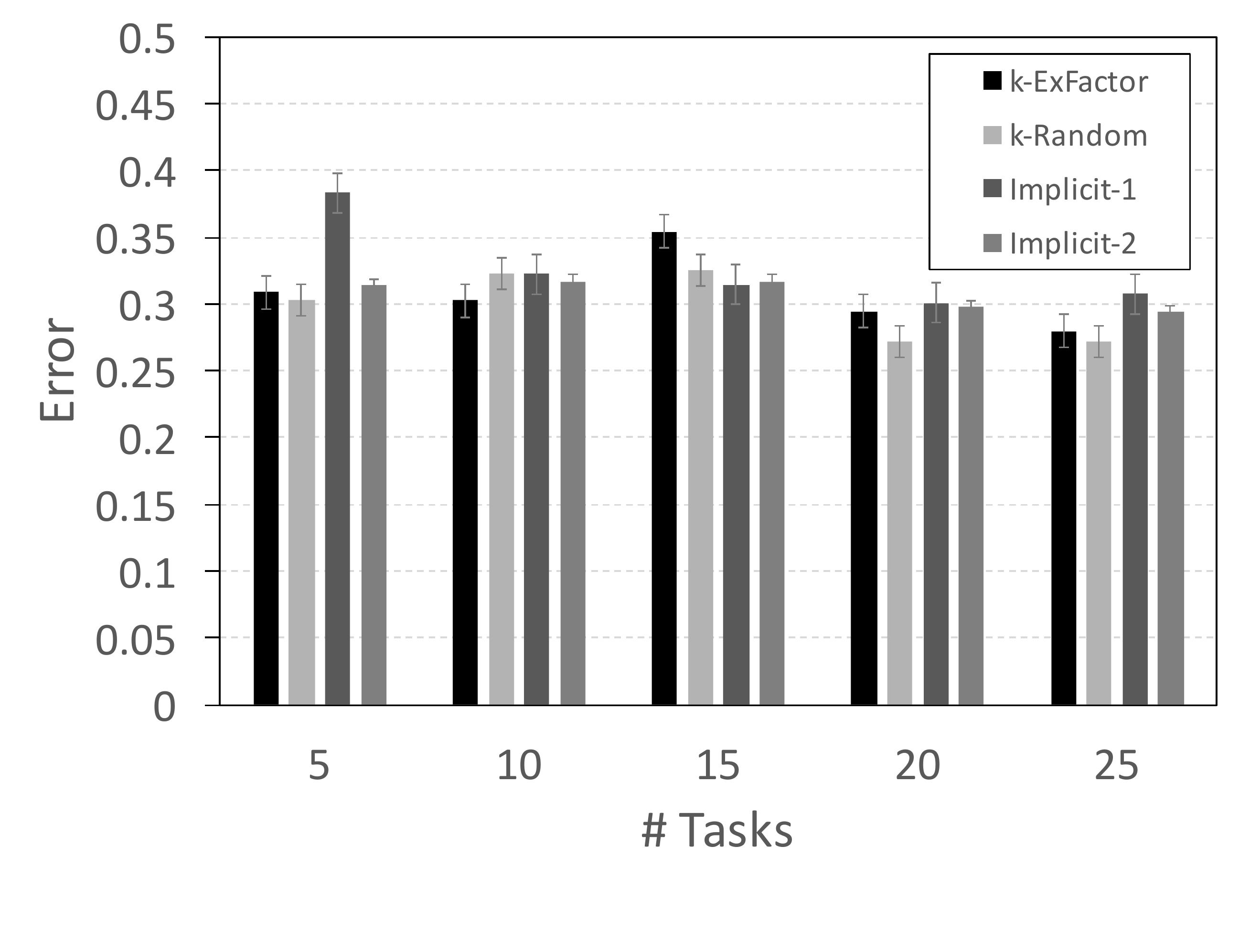}
     \label{Model_tasks}
    }
\vspace{-0.1in}
\caption{Comparison between Explicit and Implicit Preferences with Statistical Significance Test (standard error) \label{expimp}}
\end{figure*}

\subsubsection{Explicit Feedback}
Figure \ref{k_factor_result} presents the error of the two explicit preference elicitation methods as a function of the number of questions $k$. Notice that the two implicit preference algorithms do not have an input parameter $k$ but for the sake of comparison we have include their results in Figure \ref{k_factor_result}. Overall, our method, \texttt{k-ExFactor}, clearly outperforms \texttt{k-Random} and the other two implicit methods. More importantly, increasing the number of questions does not necessarily yield better results as it is shown in Figure~\ref{k_factor_result}. This could be justified by the fact that adding more constraints to the model will result in poor optimization results. That indicates that a small number of questions is good enough to elicit worker preferences and improve the \textit{Worker Model}.
\begin{figure}
  \centering
      \includegraphics[width=0.4\textwidth]{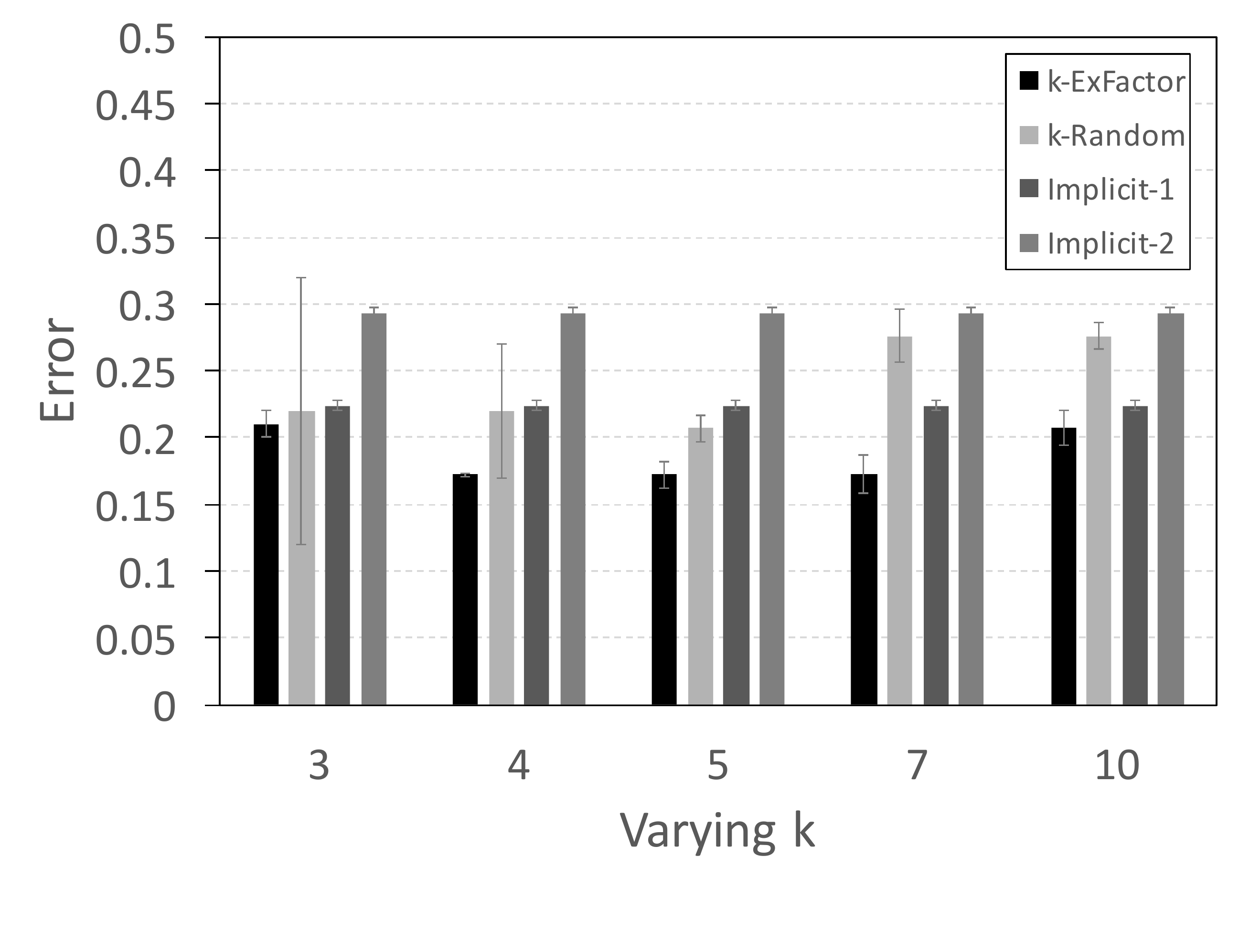}
  \caption{Error varying $k$ with Statistical Significance Test (standard error)}
  \label{k_factor_result}
\end{figure}

\subsubsection{Explicit Feedback: Full History vs Partial History}
We now present a comparative study between capturing the full history of worker preferences vs just the most recent preferences in the preference aggregation step. The size of history is the total number of explicit feedbacks received from the workers from beginning until a given point in time. As an example, if we ask $4$ explicit questions to a worker in each iteration, after the second iteration, her full history size is $8$, whereas, her most recent history size is $4$; i.e., the recent history represents the number of feedbacks in the current iterations. Figure~\ref{history_result} demonstrates the results by varying \# iterations. Clearly, the rate of error reduction for the model with a full history is higher and the error of the model in the end is slightly smaller than the model with just the most recent history. Taking into account the evolution of worker preferences in the whole session is therefore a better option.

\begin{figure}
  \centering
      \includegraphics[width=0.4\textwidth]{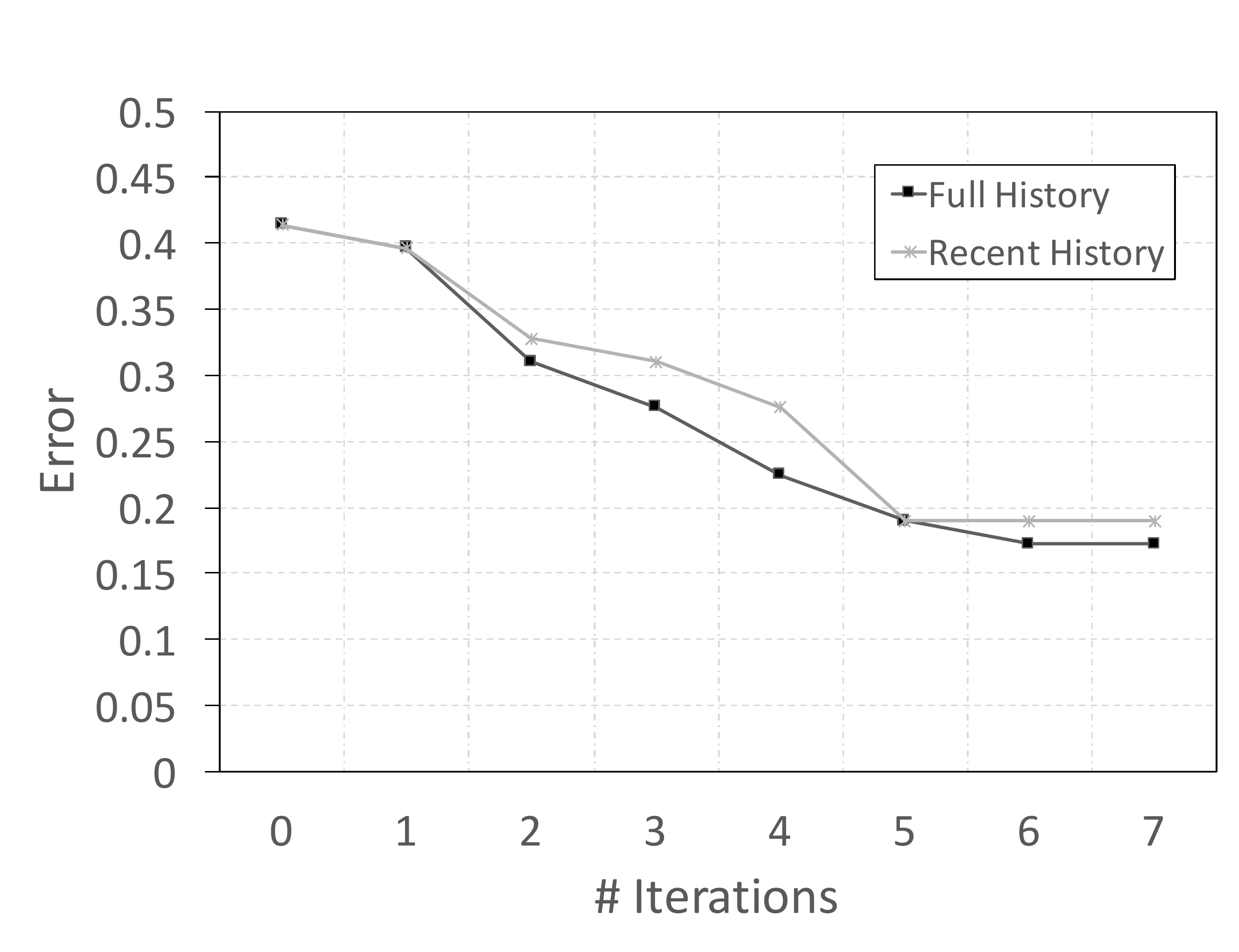}
  \caption{Comparison between full and recent history}
  \label{history_result}
\end{figure}

\subsubsection{Bootstrapping}
Figure \ref{bootstrap_result} presents the error of the three bootstrapping algorithms. For \texttt{RandomBoot} and \texttt{OptBoot} we set $b=15$ tasks, whereas, {\tt UniformBoot} just sets uniform weights to the worker preference vector. We continue to add an additional number of $b=15$ tasks from the training set and measure MSE. Initially, \texttt{OptBoot} has the best error which signals the effectiveness of our method to pick the best set of tasks that gives us the smallest error. In the preceding iterations, \texttt{OptBoot} converges to a lower error faster than the other two methods. The rate of decrease in error is the same after the fourth iteration which shows that the \textit{Worker Model} is stable and performs well.

\begin{figure}
  \centering
      \includegraphics[width=0.4\textwidth]{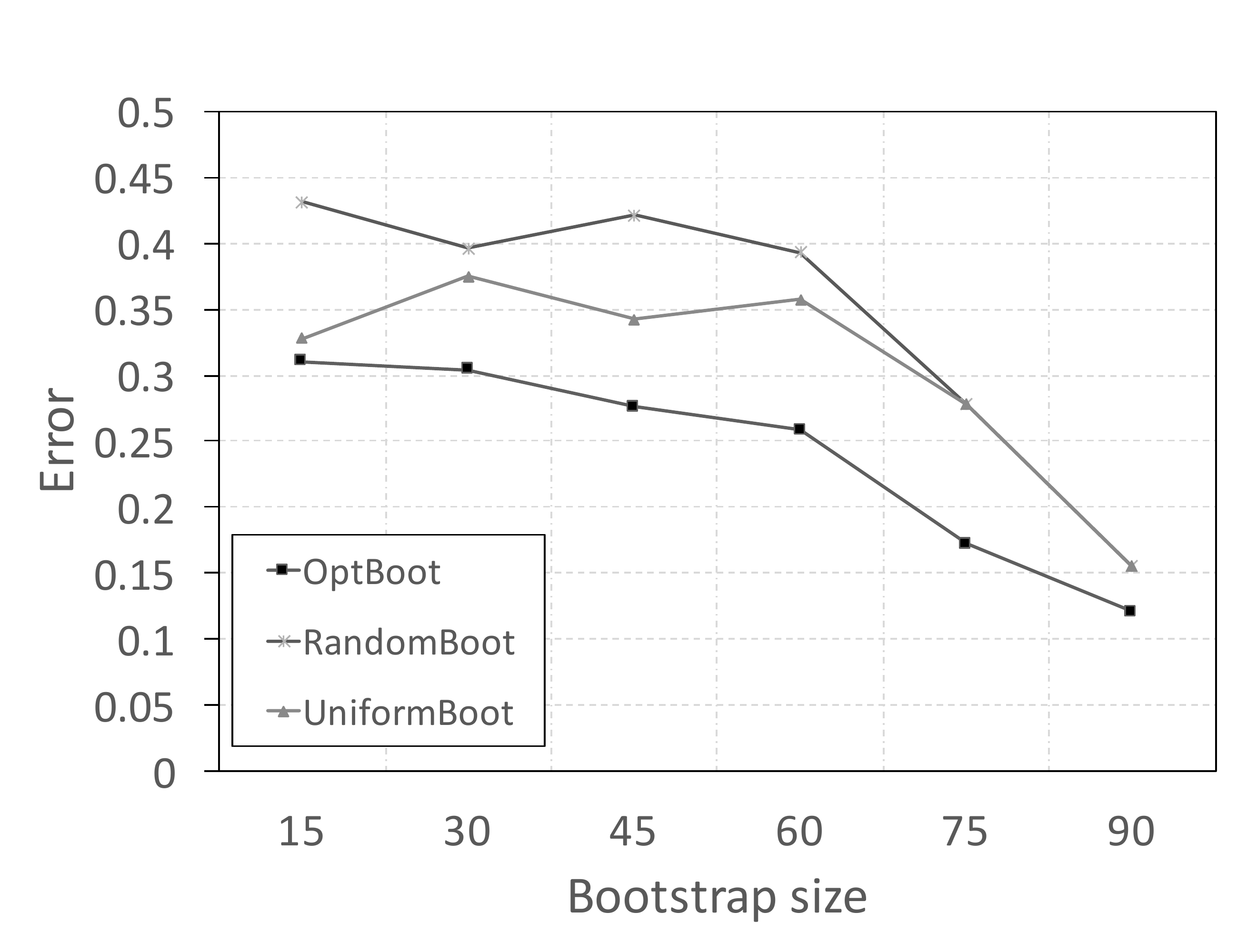}
  \caption{Evaluation of bootstrapping algorithms}
  \label{bootstrap_result}

\end{figure}

\subsubsection{Worker Model}
We profiled three workers randomly from our database and analyzed their models in conjunction with the keywords they have initially chosen. Table~\ref{tab:prof} presents the $6$ keywords chosen by the workers and the top-$2$ worker preferences. It is easy to notice that they are highly correlated, which shows that our proposed model successfully captures worker preference.

\begin{table}
    \begin{tabular}{ | l | p{3cm} | p{3cm} | }
    \hline
    Worker no & Worker Keywords & Top-$2$ preference \\ \hline
    1 & dress,google street view, airlines, classification, wheelchair accessibility, scene & dress, scene \\ \hline
    2 & business, body parts, google street view, health, new year resolution, classification & classification, google street view  \\ \hline
    3 & image, south Asia, disease, animals, text & image, text \\ \hline
    \end{tabular}\caption{Worker Keywords and Worker Model}\label{tab:prof}
\end{table}

\subsection{Scalability Experiments}
\label{sec:eff}
We conduct an in-depth scalability study of our solutions and their competitors. Unless otherwise stated, we always report running time in seconds.

{\bf Parameter Setting.} Our dataset contains $165,168$ tasks and $100$ task factors obtained from $58$ workers. In this experiment, we vary the following parameters: \# tasks, \# task factors, $k$, and the bootstrapping budget $b$. Unless otherwise stated, all the numbers present the average running time of a single iteration over all the 58 workers. The default values are set as
\# tasks = $50,000$, \# task factors = $50$, $k=3$, and $b=25$. Unless otherwise stated, all four algorithms are compared with each other. For bootstrapping comparison, only the appropriate three methods are compared.

Figure~\ref{scale} presents the results. Figure~\ref{tasks} presents the running time of the four algorithms with varying number of tasks. Of course, our proposed solution {\tt k-ExFactor} makes a lot more computation to ensure optimization and hence has the highest running time. However, it is easy to notice that with an increasing number of tasks, it scales well and the running time is comparable to the other competing algorithms. A similar observation holds when we vary the number of task factors, as shown in Figure~\ref{feat}. {\tt k-ExFactor} scales well and never takes more than $86$ seconds. Figure~\ref{varyk} represents the running times by varying $k$, the number of task factors chosen for preference elicitation.  Here only {\tt k-ExFactor} is compared with {\tt k-Random}, as the other two algorithms do not rely on explicit preference elicitation. Unsurprisingly, {\tt k-Random} is faster, but our proposed solution {\tt k-ExFactor} scales well and has a comparable running time. Finally, in Figure~\ref{varyb}, we vary the bootstrapping sample size and present the running time of {\tt OptBoot}. For efficient implementation, we only randomly profile $10\%$ of the branches of the bootstrap tree which makes the algorithm scale linearly. The other two baselines basically do not involve  any computation and take negligible time to terminate.

\begin{figure*}[ht]
\centering
\subfigure[varying \# tasks]{
   \includegraphics[height=4cm, width=4cm]{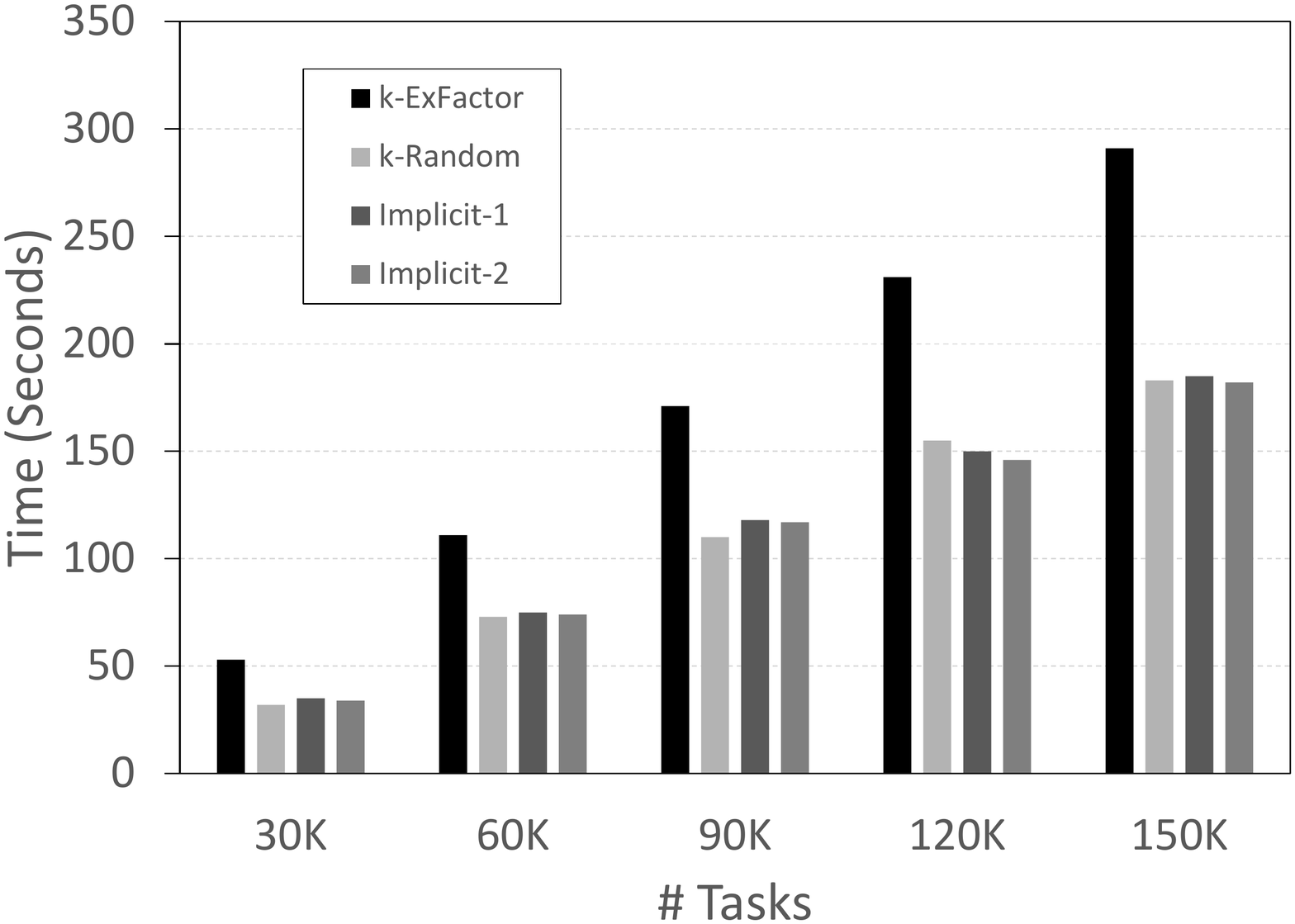}
     \label{tasks}
    }
\subfigure[varying \# task factors]{
   \includegraphics[height=4cm, width=4cm]{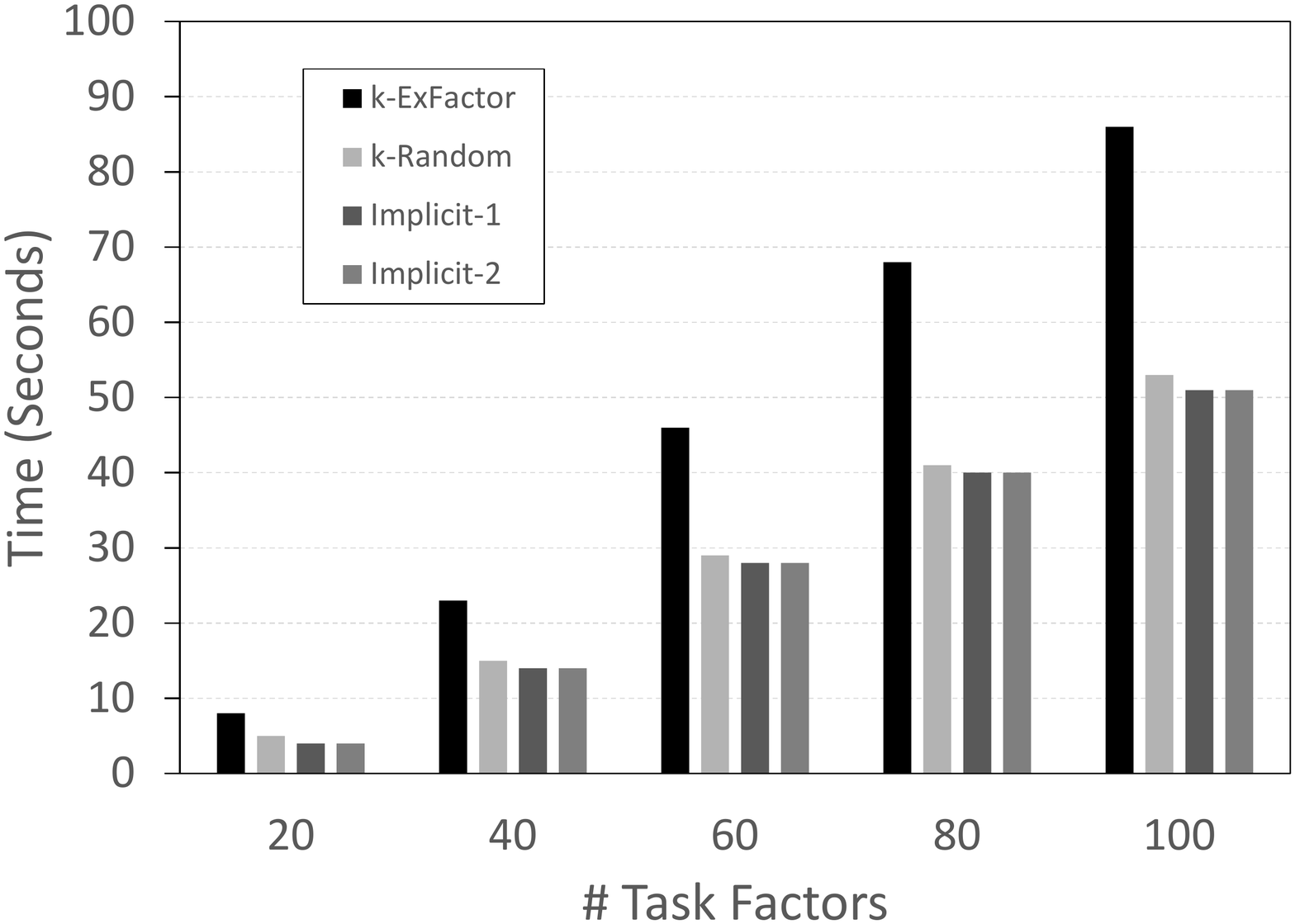}
     \label{feat}
    }
\subfigure[varying $k$]{
   \includegraphics[height=4cm, width=4cm]{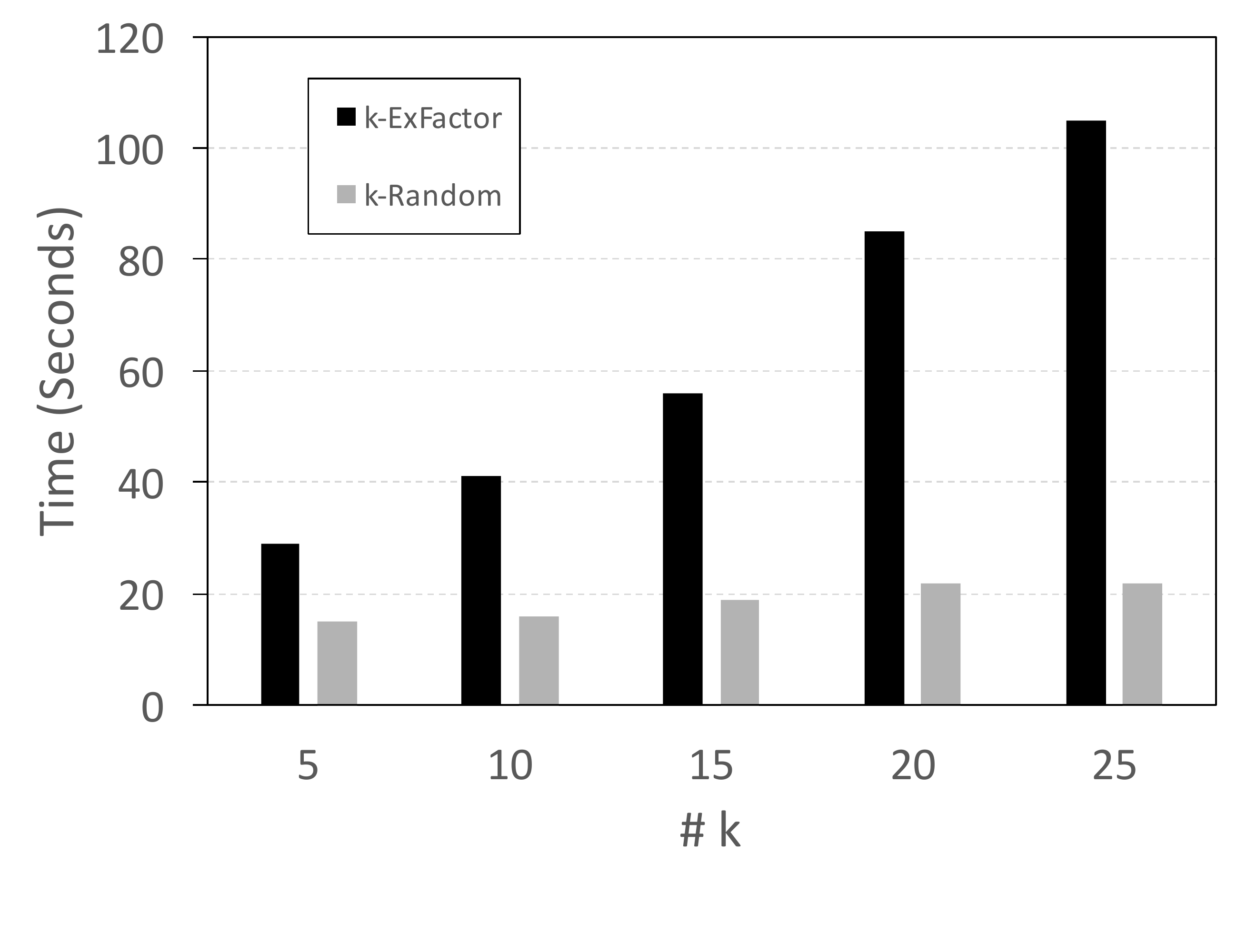}
     \label{varyk}
    }
    \subfigure[varying $b$]{
   \includegraphics[height=4cm, width=4cm]{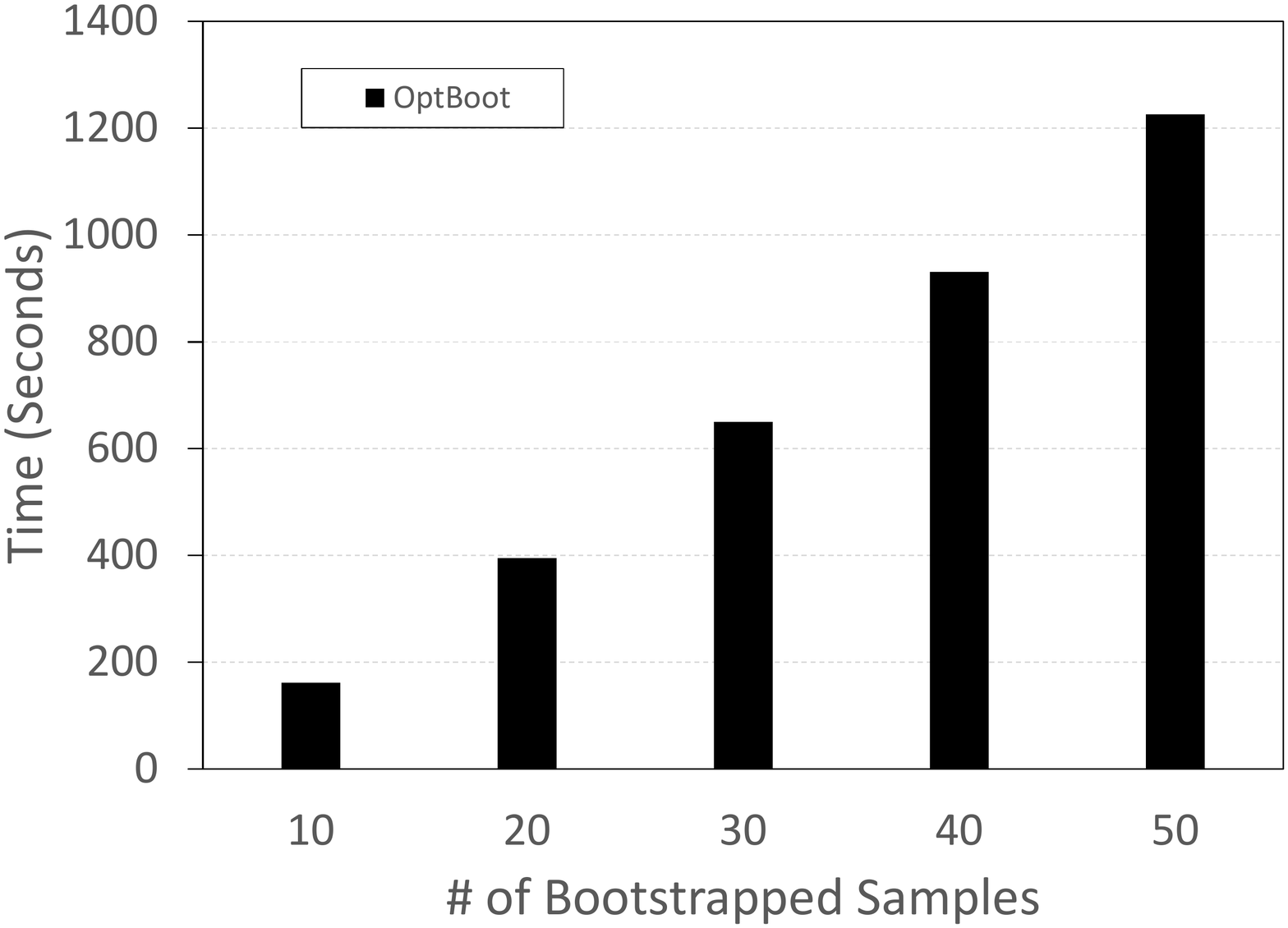}
     \label{varyb}
    }
\vspace{-0.1in}
\caption{\small Scalability study\label{scale}}
\end{figure*}

{\bf Profiling {\tt k-ExFactor}.}
We further profile the individual running time of {\tt k-ExFactor} with the default settings; i.e., \# tasks = $50,000$, \# task factors= $50$, $k=3$. It takes $28$ seconds to train the \textit{Worker Model}, $35.85$ seconds to solve {\bf Question Selector} that finds the best $k$ factors, and $29.1$ seconds to run {\bf Preference Aggregation} that updates the \textit{Worker Model} with the added constraints. These results demonstrate that the individual components of the framework take comparable time.

\section{Related Work}
\label{sec:relatedwork}
The related work can be classified into three categories: preference elicitation from the crowd, leveraging worker preferences in crowdsourcing processes, and worker models.\\  

{\bf Preference Elicitation.}
In~\cite{DBLP:conf/sigmod/GuoPG12,DBLP:conf/webdb/PolychronopoulosADGP13,DBLP:journals/tods/DavidsonKMR14}, the crowd was solicited to perform max/top-k and clustering operations with the assumption that workers may make errors. These papers study the relationship between the number of comparisons needed and error. Efficient algorithms are proposed with a guarantee to achieve correct results with high probability. A similar problem was addressed in~\cite{DBLP:conf/pods/GrozM15} in the case of a skyline evaluation. In that setting, it is assumed that items can only be compared through noisy comparisons provided by the crowd and the goal is to minimize the number of comparisons. 
A recent work studies the problem of computing the all pair distance graph~\cite{DBLP:conf/edbt/2017} by relying on noisy human workers. The authors addressed the challenge of how to aggregate those feedback and what additional feedback to solicit from the crowd to improve other estimated distances.

\textit{While we also rely on inputs from the crowd, the elicited input represents each worker's preference for different factors (as opposed to completing actual tasks), and is hence not assumed to be noisy or erroneous. However, as worker preferences evolve over time, we propose an iterative approach with the goal of improving task completion overall.}\\

{\bf Leveraging Preferences.}
Worker preferences for task factors are heavily leveraged in all crowdsourcing processes. 
Very few of these efforts focused on leveraging them in {\em task completion}~\cite{DBLP:journals/corr/abs-1210-0962,DBLP:conf/cscw/DaiRPC15,DBLP:conf/cscw/ShawHC11}. Authors of~\cite{DBLP:conf/amcis/KaufmannSV11} investigated $13$ worker {\em motivation} factors  and found that workers were interested in \textit{skill variety} or \textit{task autonomy} as much as {\em task reward}. Chandler and Kapelner~\cite{DBLP:journals/corr/abs-1210-0962} empirically showed that workers {\em perceived meaningfulness} of a task improved throughput without degrading quality. Shaw et al.~\cite{DBLP:conf/cscw/ShawHC11} assessed 14 incentives schemes and found that incentives based on {\em worker-to-worker comparisons} yield better crowd work quality. Hata et al.~\cite{CSCW17} studied worker {\em fatigue} and it affects how work quality over extended periods of time. Other efforts focused on gradually increasing pay during task completion to improve worker retention~\cite{GaoP14}. Lately, adaptive task assignment were studied with a particular focus on maximizing the quality of crowdwork~\cite{fan2015icrowd,HoJV13,HoV12,ZhengWLCF15,edbtMata} but primary for improved task assignment. 

{\em Existing work showed the importance of leveraging implicit worker preferences for task assignment. In contrast, we show explicit elicitation of worker preferences results in a more accurate model that leads to better task completion.}\\

{\bf Worker Model.}
Matrix factorization~\cite{ece,us} is used to recommend tasks to workers, where both worker and task features are latent variables in a lower dimensional space. In our work, task factors are explicit and known since they are provided by the crowdsourcing platform and by requesters. Further complex models, such as  Multi-Layered Networks as the ones used in deep learning, or Bayesian model are possible but are hard to scale - thus the two core computational problems in our framework that have to excessively use these models become prohibitively complex. \\
{\em To the best of our knowledge, we present the first principled solution for explicit preference elicitation and rigorously study scalability.}

\section{Conclusion and Future Work}
\label{sec:conclusion}
We initiate the study of investigating explicit preference elicitation for improved task completion. Our proposed framework leverages a \textit{Worker Model} that is personalized and learns from the past history of the worker and the task characteristics to predict task outcome. Two central problems that are part of our framework are {\bf Question Selector} and {\bf Preference Aggregator}. The former selects the best set of questions to elicit explicit preferences and the latter updates \textit{Worker Model} with the obtained preferences. We present principled solutions and experimentally validate the effectiveness of explicit preference elicitation. Next, we discuss some interesting extensions.

{\bf Combining Explicit and Implicit Preference.} An interesting open question is how to combine explicit preference elicitation with implicit preference computation. Our current understanding of the problem is, we only invoke explicit preference when certain event triggers it: such as, the error of the {\em Worker Model} obtained through implicit preference is too high, the worker is not undertaking enough tasks, the implicit preference solution is unable to discriminate worker's preference sufficiently. The challenge of this new problem is to design an optimization function that will guide when to seek explicit preference and when not to. 

{\bf Handling Multiple Workers.} A natural extension of our studied framework is to build and maintain a \textit{Worker Model} {\em not per worker, but for a set of workers}. A simple approach would cluster workers based on their preference vectors and aggregate the individual \textit{Worker Models} to build a \textit{Virtual Worker Model}. Such a model is likely to introduce more error (as the model is no longer personalized per worker) but is going to be more efficient to maintain. Such a model could further be used to profile workers in crowdsourcing platforms or improve crowdsourcing processes such as recruitment, completion or assignment.

{\bf Worker Models.} We present a supervised approach to develop solutions for the \textit{Worker Model}. A natural alternative is to study this problem in an unsupervised setting where the worker history is not available, using techniques such as Self Organizing Maps~\cite{kohonen1998self}.A further interesting extension is in removing the assumption that there is an one-to-one correspondence between task factors and explicit questions. As long as the correspondence between the task factors and explicit questions is defined, our proposed optimization framework would be adapted.

\bibliographystyle{IEEEtran}


\end{document}